\author{\textsc{M. Castrill\'on L\'opez} \\
Instituto de Ciencias Matem\'aticas CSIC-UAM-UC3M-UCM \\
Departamento de Geometr\'{\i}a y Topolog\'{\i}a \\
Facultad de Matem\'aticas,\\
Universidad Complutense de Madrid, 28040-Madrid, Spain \\
\emph{E-mail:\/} \texttt{mcastri@mat.ucm.es}
\and
\textsc{J. Mu\~{n}oz Masqu\'e} \\
Instituto de F\'{\i}sica Aplicada, CSIC \\
C/ Serrano 144, 28006-Madrid, Spain \\
\emph{E-mail:\/} \texttt{jaime@iec.csic.es}}
\title{\textbf{Hamiltonian Structure
of Gauge-Invariant Variational Problems}}
\newtheorem{theorem}{Theorem}[section]
\newtheorem{proposition}[theorem]{Proposition}
\newtheorem{lemma}[theorem]{Lemma}
\newtheorem{corollary}[theorem]{Corollary}
\theoremstyle{remark}
\newtheorem{remark}[theorem]{Remark}
\newtheorem{definition}[theorem]{Definition}
\begin{document}

\date{}
\maketitle

\begin{abstract}
\noindent Let $C\to M$ be the bundle
of connections of a principal bundle
on $M$. The solutions to Hamilton-Cartan
equations for a gauge-invariant Lagrangian
density $\Lambda $ on $C$ satisfying
a weak condition of regularity, are shown
to admit an affine fibre-bundle structure
over the set of solutions to Euler-Lagrange
equations for $\Lambda $. This structure
is also studied for the Jacobi fields
and for the moduli space of extremals.
\end{abstract}

\bigskip

\noindent \emph{Mathematics Subject Classification 2000:\/}
Primary 35F20;
Secondary 53C05, 58A20, 58D19, 58E15, 58E30, 81T13.

\medskip

\noindent \emph{PACS numbers:\/} 02.20.Tw,
02.30.Jr, 02.30.Xx, 11.10.Ef, 11.10.Kk

\medskip

\noindent \emph{Key words and phrases:\/}
Bundle of connections, gauge invariance,
Hamilton-Cartan equations, Jacobi field,
jet bundles, Euler-Lagrange equations,
Poincar\'e-Cartan form.

\medskip

\noindent \emph{Acknowledgments:\/}
Supported by Ministerio de Ciencia
e Innovaci\'on of Spain under grants
\#MTM2008--01386 and \#MTM2007--60017.

\section{Introduction}

Let $p\colon E\to M$ be a fibred manifold
and let $\Lambda $ be a Lagrangian density
on $J^1E$. The solutions to the Hamilton-Cartan
(H-C, in short) equations (e.g., see \cite{Garcia0},
\cite{GS}, or \cite{MP}) for $\Lambda $, are
the sections $\bar{s}\colon M\to J^1E$
of the canonical projection $p_1\colon J^1E\to M$
fulfilling the equation \eqref{HC} below.
If $\Lambda $ is regular, i.e., its Hessian metric
is non-singular (cf.\ Definition \ref{regular}
below), then every solution $\bar{s}$
to the Hamilton-Cartan equations
is holonomic, i.e., $\bar{s}=j^1s $, and $s$
is a critical section for $\Lambda $, that is,
a solution to the Euler-Lagrange (E-L, in short)
equations. Nevertheless, a Lagrangian defined
on the bundle of connections $p\colon C\to M$
of a principal bundle $\pi\colon P\to M$,
which is invariant under the gauge group of $P$,
is never regular. Therefore, Hamilton-Cartan
equations and Euler-Lagrange are not equivalent
for all the gauge invariant variational problems
on connections.

In the present work we show that the solutions
to H-C equations for a gauge-invariant Lagrangian
density $\Lambda $ on the bundle $C$ satisfying
a weak condition of regularity (most of interesting
gauge-invariant Lagrangians in the field theory
satifies this condition) admits an affine fibre-bundle
structure over the set of critical sections
of the variational problem defined by the density
under consideration. Moreover, the structure
of this fibration is completely determined; see
Theorem \ref{th1} below. This proves, in particular,
that the E-L equations of a gauge-invariant Lagrangian
are essentially of first order.

Such a structure is meaningful even from the point
of view of the observables of the field theory,
because for every extremal section $s$ the curvature
is constant along the fibre over $j^1s$.

By passing to the quotient such a fibre bundle
modulo the gauge group, we obtain
a---not necessary trivial---vector bundle associated
to the principal bundle of the moduli of extremals
of $\Lambda $; see Proposition \ref{moduli} below.
Finally, the aforementioned affine structure is also
studied for the Jacobi fields; i.e., for the vector
fields in the kernel of the linearization of the E-L
and H-C operators at any extremal (cf.\ Theorem \ref{T2}
below).

\section{Notations and preliminaries}

\subsection{Jet bundles\label{JetBundles}}

The bundle of $r$-jets of a fibred manifold
$p\colon E\to M$ is denoted by
$p_r\colon J^r(p)=J^rE\to M$ with projections
$p_{rk}\colon J^rE\to J^kE$, $r\geq k$,
the $r$-jet extension of a section
$s\colon M\to E$ of $p$ being denoted by $j^rs$.
Every $p$-fibred coordinate system
$(x^i,y^\alpha )$, $1\leq i\leq n=\dim M$,
$1\leq \alpha \leq m=\dim E-n$,
defined on an open subset $U\subseteq E$,
induces a coordinate system $(x^i,y_I^\alpha )$,
$0\leq|I|\leq r$, on $(p_{r0})^{-1}(U)=J^rU$;
namely, $y_I^\alpha (j_x^rs)=(\partial ^{|I|}
(y^\alpha \circ s)/\partial x^I)(x)$,
with $y_0^\alpha =y^\alpha $.
Every fibred map $\Phi \colon E\to E^\prime $
whose induced map $\phi \colon M\to M^\prime $
on the base manifold is a diffeomorphism,
induces a map $\Phi ^{(r)}\colon J^rE
\to J^rE^\prime $ by setting,
$\Phi ^{(r)}(j_x^rs)=j_{\phi (x)}^r
(\Phi \circ s\circ \phi ^{-1})$,
$\forall j_x^rs\in J^rE$.

An automorphism of $p$ is a pair of diffeomorphisms
$\Phi \colon E\to E$, $\phi \colon M\to M$ such that
$p\circ \Phi =\phi \circ p$.
The set of all automorphisms of $p$ is a group denoted
by $\mathrm{Aut}E$.

For every---not necessarily $p$-projectable---vector
field $Y\in \mathfrak{X}(E)$ a unique vector field
$Y^{(1)}\in \mathfrak{X}(J^1E)$ exists (called
the $1$-jet prolongation of $Y$) such that, 1st)
$Y^{(1)}$ is $p_{10}$-projectable onto $Y$, and
2nd) $L_{Y^{(1)}}\mathcal{C}\subseteq \mathcal{C}$,
where $\mathcal{C}$ is the contact system; i.e.,
the differential system generated by the contact
$1$-forms $\theta ^\alpha =dy^\alpha -y_i^\alpha dx^i$
in $\Omega ^1(J^1E)$.

If $Y$ is a $p$-vertical vector field (the only case
that we consider below), then the formulas of $1$-jet
prolongation are as follows (cf.\ \cite[Section 2.4]{MP}):

\begin{align}
Y^{(1)}
& =v^\alpha
\frac{\partial }{\partial y^\alpha }
+v_i^\alpha
\frac{\partial }{\partial y_i^\alpha },
\quad
v_i^\alpha
=\frac{\partial v^\alpha }{\partial x^i}
+\frac{\partial v^\alpha }
{\partial y^\beta }y_i^\beta ,
\label{i.c.t.} \\
Y & =v^\alpha
\frac{\partial }{\partial y^\alpha },
\quad v^\alpha \in C^\infty (E).
\nonumber
\end{align}

\begin{lemma}
[\cite{MP}]
\label{l1}
For every $p$-vertical vector field
$Y\in \mathfrak{X}^v(E)$ the value of the vector
field $Y^{(1)}$ at a point $j_x^1s\in J^1E$
depends only on $j_x^1\bigl( Y\circ s\bigr) $.
\end{lemma}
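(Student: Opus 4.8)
The plan is to prove the statement by a direct computation in a $p$-fibred chart, the whole content being encoded in a single application of the chain rule; the only real work is to interpret $j_x^1(Y\circ s)$ correctly and to recognise that it already contains all the data appearing in the prolongation formula \eqref{i.c.t.}. First I would fix coordinates $(x^i,y^\alpha)$ on $E$ and write $Y=v^\alpha\,\partial/\partial y^\alpha$ with $v^\alpha\in C^\infty(E)$. Since $Y$ is $p$-vertical, the composite $Y\circ s$ is naturally a section $M\to VE$ of the fibred manifold $VE\to M$ (the projection $VE\to E$ followed by $p$), where $VE=\ker Tp$ is the vertical bundle. In the induced coordinates $(x^i,y^\alpha,\dot y^\alpha)$ on $VE$ this section reads $x\mapsto\bigl(x^i,s^\alpha(x),(v^\alpha\circ s)(x)\bigr)$, so that $j_x^1(Y\circ s)$ records exactly the numbers $s^\alpha(x)$ and $(v^\alpha\circ s)(x)$ together with their first partial derivatives $\partial s^\alpha/\partial x^i$ and $\partial(v^\alpha\circ s)/\partial x^i$ at $x$.

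Next I would write $Y^{(1)}$ at $j_x^1s$ using \eqref{i.c.t.}: its $\partial/\partial y^\alpha$-component is $v^\alpha(s(x))=(v^\alpha\circ s)(x)$, while its $\partial/\partial y_i^\alpha$-component is $v_i^\alpha(j_x^1s)=(\partial v^\alpha/\partial x^i)(s(x))+(\partial v^\alpha/\partial y^\beta)(s(x))\,y_i^\beta(j_x^1s)$. The key step is to observe, applying the chain rule to $v^\alpha\circ s$ and using $y_i^\beta(j_x^1s)=(\partial s^\beta/\partial x^i)(x)$, that $v_i^\alpha(j_x^1s)=\partial(v^\alpha\circ s)/\partial x^i(x)$. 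Hence both families of components of $Y^{(1)}_{j_x^1s}$ coincide precisely with the value and the first derivatives of the vertical part $v^\alpha\circ s$, that is, with the coordinate data carried by $j_x^1(Y\circ s)$.

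Finally I would close the argument by noting that the base point $j_x^1s$ itself, where the tangent vector $Y^{(1)}_{j_x^1s}$ lives, is already determined by $j_x^1(Y\circ s)$: projecting $VE\to E$ and taking $1$-jets sends $j_x^1(Y\circ s)$ to $j_x^1s$, equivalently the $s^\alpha$ and $\partial s^\alpha/\partial x^i$ data sit inside $j_x^1(Y\circ s)$. Therefore $Y^{(1)}_{j_x^1s}$ is a well-defined function of $j_x^1(Y\circ s)$ alone, which is the assertion.

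I expect the only delicate point to be conceptual rather than computational: pinning down in which fibred manifold the jet $j_x^1(Y\circ s)$ is to be taken, and verifying that this single jet simultaneously encodes the base-point information $j_x^1s$ and the coefficients of the prolongation, so that the dependence statement is literally well-posed. Once $Y\circ s$ is read as a section of $VE\to M$, everything reduces to the chain-rule identity above, and no appeal to a particular chart is needed in the end because $Y^{(1)}$ is itself an intrinsically defined vector field on $J^1E$.
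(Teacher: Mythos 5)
The paper gives no proof of this lemma---it is simply quoted from the reference \cite{MP}---so there is nothing internal to compare against; your argument is the standard one and it is correct. The chain-rule identity $v_i^\alpha(j_x^1s)=\partial (v^\alpha\circ s)/\partial x^i(x)$ applied to the prolongation formula \eqref{i.c.t.}, together with the observation that $j_x^1(Y\circ s)$, read as the $1$-jet of a section of $V(p)\to M$, already determines the base point $j_x^1s$, is exactly what is needed.
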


\subsection{Bundles of connections
\label{BundleOfConnections}}

An automorphism of a principal $G$-bundle
$\pi \colon P\to M$ is an equivariant diffeomorphism
$\Phi \colon P\to  P$; i.e., $\Phi (u\cdot g)
=\Phi (u)\cdot g$, $\forall u\in P$, $\forall g\in G$.
We denote by $\mathrm{Aut}P$ the group of all
automorphisms of $P$ under composition. Every
$\Phi \in \mathrm{Aut}P$ determines a unique
diffeomorphism $\phi \colon M\to M$, such that
$\pi \circ \Phi =\phi \circ \pi $. If $\phi $
is the identity map, then $\Phi $ is said to be
a gauge transformation (cf.\ \cite[3.2.1]{Bl}).
We denote by $\mathrm{Gau}P\subset \mathrm{Aut}P$
the subgroup of all gauge transformations.

A vector field $X\in \mathfrak{X}(P)$
is said to be $G$-invariant if
$R_g\cdot X=X$, $\forall g\in G$.
If $\Phi _t$ is the flow of a vector
field $X\in \mathfrak{X}(P)$, then $X$
is $G$-invariant if and only if
$\Phi _t\in \mathrm{Aut}P$,
$\forall t\in \mathbb{R}$.
Because of this we denote the Lie
subalgebra of $G$-invariant vector fields
on $P$ by $\mathrm{aut}P
\subset \mathfrak{X}(P)$. Each $G$-invariant
vector field on $P$ is $\pi$-projectable.
Similarly, a $\pi $-vertical vector field
$X\in \mathfrak{X}( P)$ is $G$-invariant
if and only if $\Phi _t\in \mathrm{Gau}P$,
$\forall t\in \mathbb{R}$. We denote by
$\mathrm{gau}P\subset \mathrm{aut}P$
the ideal of all $\pi$-vertical
$G$-invariant vector fields on $P$,
which is called the gauge algebra of $P$.

The group $G$ acts on $TP$ by setting
$X\cdot g=(R_g)_\ast (X)$,
$\forall X\in TP$, $\forall g\in G$.
The global sections of the quotient vector
bundle $T(P)/G$ can naturally be identified
to $\mathrm{aut}P$; i.e., $\mathrm{aut}P
\cong \Gamma (M,T(P)/G)$. Similarly,
$\mathrm{gau}P\cong \Gamma (M,\mathrm{ad}P)$,
where $\mathrm{ad}P$ denotes the adjoint
bundle: The bundle associated to $P$
by the adjoint representation of $G$
on its Lie algebra $\mathfrak{g}$,
denoted by $\pi _{\mathfrak{g}}
\colon \mathrm{ad}P\to M$; that is,
$\mathrm{ad}P=(P\times \mathfrak{g})/G$,
where the action of $G$ on
$P\times\mathfrak{g}$ is defined by
\[
\left(
u,B
\right)
\cdot g
=\left(
u\cdot g,\mathrm{Ad}_{g^{-1}}
(B)
\right) ,
\quad
\forall u\in P,
\; \forall B\in \mathfrak{g},
\; \forall g\in G.
\]
The $G$-orbit in $\mathrm{ad}P$ of a pair
$(u,B)\in P\times\mathfrak{g}$ is denoted
by $(u,B)_{\mathrm{ad}}$.

An exact sequence of vector bundles over
$M$ (the so-called Atiyah sequence) holds,
$0\to \mathrm{ad}P\to  T(P)/G
\overset{\pi _\ast}{\longrightarrow }TM
\to 0$. The fibres $(\mathrm{ad}P)_x$
are endowed with a Lie-algebra structure
determined by
$[(u,B)_{\mathrm{ad}},(u,B^\prime )_{\mathrm{ad}}]
=(u,-[B,B^\prime ])_{\mathrm{ad}}$,
for all $u\in \pi ^{-1}(x)$,
$B,B^\prime \in \mathfrak{g}$, where
$[\cdot ,\cdot ]$ denotes the bracket
in $\mathfrak{g}$. The sign of the bracket
above is needed in order to ensure
that the natural identification
$\mathrm{gau}P\cong \Gamma (M,\mathrm{ad}P)$
is a Lie-algebra isomorphism, when
$\mathrm{gau}P$ is considered as a Lie
subalgebra of $\mathfrak{X}(P)$.

Let $X^{h_\Gamma }\in \mathfrak{X}(P)$
be the horizontal lift of a vector field
$X\in \mathfrak{X}(M)$ with respect
to a connection $\Gamma $ on
$\pi \colon P\to M$. The vector field
$X^{h_\Gamma }$ is $G$-invariant
and projects onto $X$ (cf.\
\cite[II. Proposition\ 1.2]{KN}).
Hence we have a splitting of the Atiyah
sequence, $s_\Gamma \colon TM\to T(P)/G$,
$s_\Gamma (X)=X^{h_\Gamma }$. Conversely,
any splitting $\sigma \colon TM\to T(P)/G$
of the Atiyah sequence (i.e., $s$ is a vector
bundle homomorphism such that
$\pi _\ast\circ s=\mathrm{id}_{TM}$)
comes from a unique connection on $P$
so that there is a natural bijection
between connections on $P$ and splittings
of the Atiyah sequence. We thus define
the bundle of connections $p\colon C=C(P)\to M$
as the sub-bundle of $\mathrm{Hom}(TM,T(P)/G)$
determined by all $\mathbb{R}$-linear mappings
$\lambda\colon T_xM\to (T(P)/G)_x$ such that
$\pi _\ast \circ \lambda =\mathrm{id}_{T_xM}$.
Connections on $P$ can be identified
to the global sections of $p\colon C\to M$.
We also denote by $s_\Gamma \colon M\to C$
the section of the bundle of connections induced
by $\Gamma $.

An element $\lambda \colon T_xM\to (T(P)/G)_x$
of the bundle $C$ over a point $x\in M$ is nothing
but a `connection at a point $x$'; i.e., $\Lambda $
induces a complementary subspace $H_u$
of the vertical subspace $V_u(P)\subset T_u(P)$
for every $u\in \pi ^{-1}(x)$. Any other connection
at $x$ can be written as $\lambda ^\prime
=h+\lambda $, where $h\colon T_xM
\to (\mathrm{ad}P)_x$ is a linear map. Hence $C$
is an affine bundle modelled over the vector bundle
$\mathrm{Hom}(TM,\mathrm{ad}P)\cong T^\ast M
\otimes \mathrm{ad}P$.

Let $(U;x^i)$ be a coordinate open domain in $M$
such that $\pi ^{-1}(U)\cong U\times G$. For every
$B\in \mathfrak{g}$ we define a flow of gauge
transformations over $U$ by setting
$\varphi _t^B(x,g)=(x,\exp (tB)\cdot g)$, $x\in U$.
Let $\tilde{B}$ be the corresponding infinitesimal
generator. If $(B_1,\dotsc,B_m)$ is a basis
of $\mathfrak{g}$, then
$\tilde{B}_1,\dotsc,\tilde{B}_m$ is a basis
of $\Gamma (U,\mathrm{ad}P)$.
Let $p\colon C\to M$ be the bundle of connections
of $P$. The horizontal lift with respect to $\Gamma$
of the basic vector field $\partial /\partial x^i$
is given as follows:
\begin{equation*}
s_\Gamma
\Bigl(
\frac{\partial }{\partial x^i}
\Bigr)
=
\Bigl(
\frac{\partial }{\partial x^i}
\Bigr) ^{h_\Gamma }
=\frac{\partial }{\partial x^i}
-\left(
A_i^\alpha \circ s_\Gamma
\right)
\tilde{B}_\alpha .
\end{equation*}
The functions $(x^i,A_j^\alpha )$, $i,j=1,\dotsc,n=\dim M$,
$1\leq \alpha\leq m=\dim G$, induce a coordinate system
on $p^{-1}(U)=C(\pi ^{-1}U)$ (cf.\ \cite{CM}).

Each automorphism $\Phi \in \mathrm{Aut}P$ acts
on connections of $P$ by pulling back connection forms;
that is, $\Gamma ^\prime =\Phi (\Gamma )$ where
$\omega _{\Gamma ^\prime }=(\Phi ^{-1})^\ast \omega _\Gamma $
(cf.\ \cite[II. Proposition\ 6.2-(b)]{KN}).
If $\Psi\in \mathrm{Aut}P$ is another automorphism,
then $(\Psi \circ \Phi )(\Gamma )$
$=\Psi (\Phi (\Gamma ))$. For each $\Phi \in \mathrm{Aut}P$
there exists a unique diffeomorphism $\Phi _C\colon C\to C$
such that $p\circ \Phi _{\mathcal{C}}=\Phi \circ p$, where
$\Phi \colon M\to M$ is the diffeomorphism induced by $\Phi $
on the base manifold. We thus obtain a group homomorphism
$\mathrm{Aut}P\to \mathrm{Diff}C$. For every connection
$\Gamma $ on $P$ we have
$\Phi _C\circ s_\Gamma =s_{\Phi (\Gamma )}$.

If $\Phi _t$ is the flow of a $G$-invariant vector field
$X\in \mathrm{aut}P$, then $(\Phi _t)_C$ is a one-parameter
group in $\mathrm{Diff}C$ with infinitesimal generator
denoted by $X_C$, and the map $\mathrm{aut}P\to \mathfrak{X}(C)$,
$X\mapsto X_C$ is a Lie-algebra homomorphism.

\subsection{Affine-bundle structures\label{AffineBundles}}

Let $V(p)=\{ X\in TE:p_\ast X=0\} $ be the vertical subbundle
of a fibred manifold $p\colon E\to M$.

\begin{enumerate}
\item[(a)]
Let $p\colon E\to M$ be an affine bundle modelled
over the vector bundle $p_W\colon W\to M$.
The directional derivative determines an isomorphism
of vector bundles over $E$, $p^\ast W\cong V(p)$,
$(e,w)\mapsto X_{e,w}$, $p(e)=p_W(w)=x$, where $X_{e,w}$
is the tangent vector at $t=0$ to the curve $t\mapsto tw+e$,
which takes values in the fibre $p^{-1}(x)$. In coordinates,
$X_{e,w}=w^\alpha (w)(\partial /\partial e^\alpha )_e$.

\item[(b)]
Moreover, if $p\colon E\to M$ is an arbitrary surjective
submersion, then the projection $p_{10}\colon J^1E\to E$
is endowed with an affine-bundle structure modelled
over $p^\ast T^\ast M\otimes V(p)$. In fact, every jet
$j_x^1s\in (p_{10})^{-1}(e)$, with $s(x)=e$, can be
identified to the section $s_{\ast,e}\colon T_xM\to T_eE$
of $p_{\ast ,e}\colon T_eE\to T_xM$. Hence, if
$j_x^1s^\prime \in (p_{10})^{-1}(e)$ is another jet,
then $p_{\ast ,e}\circ(s_{\ast ,e}^\prime -s_{\ast ,e})=0$
and accordingly $s_{\ast ,e}^\prime -s_{\ast ,e}$ takes
values into $V_e(p)$. Therefore,
$s_{\ast ,e}^\prime -s_{\ast ,e}$ determines an element
in $\mathrm{Hom}(T_xM\otimes V_e(p))=T_x^\ast M
\otimes V_e(p)$. From (a) it follows an isomorphism,
$p_{10}^\ast (p^\ast T^\ast M\otimes V(p))
=T^\ast M\otimes _{J^1C}V(p)\cong V(p_{10})$.
\end{enumerate}

\subsection{The Hessian metric\label{HessianMetric}}

If $p\colon E\to M$ be an affine bundle modelled
over the vector bundle $p_W\colon W\to M$, then,
according to the item (a) in Section \ref{AffineBundles},
every $w\in (p_W)^{-1}(x)$ induces a vector field
along the fibre $X_w\in \mathfrak{X}(p^{-1}(x))$,
$X_w(e)=X_{e,w}$, $\forall e\in p^{-1}(x)$. For every
$f\in C^\infty (E)$ and every $e\in E$, with $x=p(e)$,
a bilinear form
\[
\mathrm{Hess}_e(f)\colon V_e(p)\times V_e(p)\to \mathbb{R}
\]
can be defined as follows: $\mathrm{Hess}_e(f)(w_1,w_2)
=X_{w_2}(e)((d_{E/M}f)X_{w_1})$, where the canonical
isomorphism $W_x\cong V_e(p)$, defined in the item (a)
in Section \ref{AffineBundles}, has be used and $d_{E/M}$
denotes the fibred derivative, e.g., see \cite{GS}.
As this form is proved to be symmetric,
$e\mapsto \mathrm{Hess}_e(f)$ defines a section
of the vector bundle $S^2V^\ast (p)\cong p^\ast S^2W^\ast $.

\section{Hamilton-Cartan equations\label{sectionHC}}

Let $p\colon E\to M$ be a fibred manifold, $\dim M=n$,
$\dim E=m+n$, where $M$ is assumed to be connected
and oriented by a volume form $\mathbf{v}$. Below,
Latin indices run from $1$ to $n$, and Greek indices
run from $1$ to $m$. The solutions to the Hamilton-Cartan
equations for a density $\Lambda =L\mathbf{v}$,
$L\in C^\infty (J^1E)$ on $p$, are the sections
$\bar{s}\colon M\to J^1E$ of the canonical projection
$p_1\colon J^1E\to M$ such that,
\begin{equation}
\bar{s}^\ast
\left(
i_Xd\Theta _\Lambda
\right)
=0,
\quad
\forall X\in \mathfrak{X}^v(J^1E), \label{HC}
\end{equation}
where

\begin{description}
\item[$\mathrm{(i)}$]
$\Theta _\Lambda
=(-1)^{i-1}(\partial L/\partial y_i^\alpha )
\theta ^\alpha \wedge \mathbf{v}_i
+L\mathbf{v}$ is the Poincar\'e-Cartan
form attached to $\Lambda $ (cf.\ \cite{GS},
\cite{MP}),

\item[$\mathrm{(ii)}$]
$\mathfrak{X}^v(J^1E)$ denotes the Lie algebra
of $p_1$-vertical vector fields,

\item[$\mathrm{(iii)}$]
$\theta^\alpha =dy^\alpha -y_i^\alpha dx^i$ are
the standard contact forms on the $1$-jet bundle,
and

\item[$\mathrm{(iv)}$]
$(x^i,y^\alpha ,y_i^\alpha )$ is the coordinate
system on $J^1E$ induced by a fibred coordinate
system $(x^i,y^\alpha )$ for the submersion $p$,
adapted to the given volume form; i.e.,
$\mathbf{v}=dx^1\wedge\ldots \wedge dx^n$ and
$\mathbf{v}_i
=(-1)^{i-1}i_{\partial /\partial x^i}\mathbf{v}$.
\end{description}

\begin{lemma}
\label{L1}
A section $\bar{s}\colon M\to J^1E$
of $p_1\colon J^1E\to M$ is a solution
to the H-C equations \emph{\eqref{HC}}
if and only if the following equations hold:
\begin{equation*}
\bigl(
s_j^\beta -\bar{s}_j^\beta
\bigr)
\Bigl(
\dfrac{\partial ^2L}
{\partial y_i^\alpha \partial y_j^\beta }
\circ \bar{s}
\Bigr)
=0,
\quad
1\leq i\leq n,1\leq \alpha \leq m,
\end{equation*}
\begin{equation*}
-\dfrac{\partial }{\partial x^j}
\Bigl(
\dfrac{\partial L}{\partial y_j^\alpha }
\circ\bar{s}
\Bigr)
+\dfrac{\partial L}{\partial y^\alpha }
\circ\bar{s}
+\bigl(
s_j^\beta -\bar{s}_j^\beta
\bigr)
\Bigl(
\dfrac{\partial ^2L}
{\partial y^\alpha \partial y_j^\beta }
\circ\bar{s}
\Bigr)
=0,
\end{equation*}
for $1\leq \alpha \leq m$, where
$s^\alpha =y^\alpha \circ\bar{s}$,
$s_i^\alpha =\partial s^\alpha /\partial x^i$,
and $\bar{s}_i^\alpha =y_i^\alpha \circ \bar{s}$.
\end{lemma}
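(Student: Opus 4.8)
The plan is to expand the single equation \eqref{HC} in the adapted coordinates $(x^i,y^\alpha,y_i^\alpha)$ and then exploit the arbitrariness of the $p_1$-vertical field $X$ to split it into the two stated conditions.

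First I would record that the two factors $(-1)^{i-1}$ cancel, since $\mathbf{v}_i=(-1)^{i-1}i_{\partial/\partial x^i}\mathbf{v}$, so that
\begin{equation*}
\Theta_\Lambda=\frac{\partial L}{\partial y_i^\alpha}\,\theta^\alpha\wedge i_{\partial/\partial x^i}\mathbf{v}+L\mathbf{v}.
\end{equation*}
Writing $p_i^\alpha=\partial L/\partial y_i^\alpha$ and using $d\theta^\alpha=-dy_j^\alpha\wedge dx^j$ together with the identities $dx^j\wedge i_{\partial/\partial x^i}\mathbf{v}=\delta_i^j\mathbf{v}$ and $dx^j\wedge\mathbf{v}=0$, a short computation gives
\begin{equation*}
d\Theta_\Lambda=dp_i^\alpha\wedge\theta^\alpha\wedge i_{\partial/\partial x^i}\mathbf{v}+\frac{\partial L}{\partial y^\alpha}\,dy^\alpha\wedge\mathbf{v}.
\end{equation*}
The decisive point is that the term $p_i^\alpha\,d\theta^\alpha\wedge i_{\partial/\partial x^i}\mathbf{v}=-(\partial L/\partial y_i^\alpha)\,dy_i^\alpha\wedge\mathbf{v}$ cancels exactly the $dy_j^\alpha$-part of $dL\wedge\mathbf{v}$; this is precisely the feature of the Poincar\'e-Cartan form that makes the argument work.

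Next I would contract with an arbitrary field $X=X^\alpha\,\partial/\partial y^\alpha+X_i^\alpha\,\partial/\partial y_i^\alpha\in\mathfrak{X}^v(J^1E)$. Since $X$ kills $dx^i$, one has $i_X\theta^\alpha=X^\alpha$, while $i_X\mathbf{v}=0$ and $i_X(i_{\partial/\partial x^i}\mathbf{v})=0$, so applying the graded Leibniz rule for the interior product leaves only a handful of terms. I would then pull back along $\bar{s}$, using
\begin{equation*}
\bar{s}^\ast\theta^\alpha=(s_i^\alpha-\bar{s}_i^\alpha)\,dx^i,\qquad \bar{s}^\ast dx^i=dx^i,\qquad \bar{s}^\ast dp_i^\alpha=\frac{\partial(p_i^\alpha\circ\bar{s})}{\partial x^j}\,dx^j,
\end{equation*}
and the identity $dx^j\wedge i_{\partial/\partial x^i}\mathbf{v}=\delta_i^j\mathbf{v}$ once more, so that $\bar{s}^\ast(i_Xd\Theta_\Lambda)$ becomes a single multiple of $\mathbf{v}$ whose coefficient is an $\mathbb{R}$-linear expression in the functions $X^\alpha\circ\bar{s}$ and $X_i^\alpha\circ\bar{s}$.

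Finally, because a $p_1$-vertical field may realize arbitrary prescribed values of $X^\alpha$ and $X_i^\alpha$ at each point of the image of $\bar{s}$, equation \eqref{HC} holds if and only if the coefficients of $X_j^\beta\circ\bar{s}$ and of $X^\beta\circ\bar{s}$ vanish separately at every point. The coefficient of $X_j^\beta\circ\bar{s}$ is $(s_i^\alpha-\bar{s}_i^\alpha)(\partial^2L/\partial y_i^\alpha\partial y_j^\beta\circ\bar{s})$, which is the first stated equation after relabelling the indices and using the symmetry of the Hessian; the coefficient of $X^\beta\circ\bar{s}$, once $\partial p_i^\alpha/\partial y^\beta$ is rewritten as the mixed second derivative, is the second stated equation. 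I expect the only genuine difficulty to be bookkeeping---the cancellation producing the compact form of $d\Theta_\Lambda$, the signs in the graded interior product, and the clean separation of the two independent families of components of $X$---rather than any conceptual obstacle.
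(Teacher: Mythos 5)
Your proposal is correct and follows essentially the same route as the paper: expand $d\Theta_\Lambda$ in the adapted coordinates (your compact expression is equivalent to the paper's formula \eqref{d_Theta}), contract with vertical fields, pull back along $\bar{s}$, and separate the coefficients of the $\partial/\partial y^\alpha$- and $\partial/\partial y_i^\alpha$-components, which is exactly what the paper does by substituting those basis fields directly. The bookkeeping you flag (the cancellation of the $dy_i^\alpha$-terms, the identity $dx^j\wedge i_{\partial/\partial x^i}\mathbf{v}=\delta_i^j\mathbf{v}$, and the pointwise arbitrariness of the vertical components) all goes through as you describe.
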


\begin{proof}
As a simple computation shows, we have
\begin{equation}
d\Theta _\Lambda
=\theta ^\beta \wedge
\left(
(-1)^jd
\Bigl(
\frac{\partial L}{\partial y_j^\beta }
\Bigr)
\wedge \mathbf{v}_j
+\frac{\partial L}{\partial y^\beta }
\mathbf{v}
\right) . \label{d_Theta}
\end{equation}
Hence
\begin{align*}
\bar{s}^\ast
\left(
i_{\partial /\partial y_i^\alpha }
d\Theta _\Lambda
\right)
& =(-1)^{j-1}
\Bigl(
\dfrac{\partial ^2L}
{\partial y_i^\alpha \partial y_j^\beta }
\circ \bar{s}
\Bigr)
\bar{s}^\ast
\theta^\beta \wedge\mathbf{v}_j, \\
\bar{s}^\ast
\left(
i_{\partial /\partial y^\alpha }
d\Theta _\Lambda
\right)
& =\left(
-\dfrac{\partial }{\partial x^j}
\Bigl(
\dfrac{\partial L}{\partial y_j^\alpha }
\circ \bar{s}
\Bigr)
+\dfrac{\partial L}{\partial y^\alpha }
\circ \bar{s}
\right)
\mathbf{v} \\
&
\quad
\qquad
\mathbf{+}(-1)^{j-1}
\Bigl(
\dfrac{\partial ^2L}
{\partial y^\alpha \partial y_j^\beta }
\circ \bar{s}
\Bigr)
\bar{s}^\ast \theta ^\beta \wedge \mathbf{v}_j,
\end{align*}
and the formulas in the statement follow.
\end{proof}

If $\bar{s}=j^1s$ is a holonomic section,
then $\bar{s}$ is a solution to the H-C
equations if and only if $s$ is a solution
to the Euler-Lagrange equations. If $L$ is
regular, then the converse holds true:
Every solution to the H-C equations, is
of the form $\bar{s}=j^1s$, $s$ being
a solution to the E-L equations. Hence,
for regular variational problems,
H-C equations are equivalent to E-L equations;
but this is no longer true for non-regular
densities, as is the case for the Yang-Mills
Lagrangian.

\section{Jacobi fields\label{sectionJacobi}}

\subsection{Jacobi fields introduced}

Let $p\colon E\to M$ be a fibred manifold
and let $\Omega ^1(E/M)=\Gamma (M,V^\ast (p))$.
Let
\[
\mathcal{E}^\Lambda \colon \Gamma (p)
\to \Omega ^1(E/M)\otimes _{C^\infty (M)}\Omega ^n(M)
\]
be the Euler-Lagrange operator
of $\Lambda =L\mathbf{v}$, $L\in C^\infty (J^1E)$,
which is the second-order differential operator
locally given on a fibred coordinate system
$(U;x^i,y^\alpha )$ for the submersion $p$
and for every section $s$ of $p|_U$ by,
\[
\mathcal{E}^\Lambda (s)
=\left(
\mathcal{E}_\alpha ^\Lambda \circ j^2s
\right)
d_{E/M}y^\alpha \otimes \mathbf{v},
\]
where the functions
$\mathcal{E}_\alpha ^\Lambda \in C^\infty (J^2E)$
are defined as follows:
\[
\mathcal{E}_\alpha ^\Lambda
\left(
j_x^2s
\right)
=\frac{\partial L}{\partial y^\alpha }(j_x^1s)
-\frac{\partial }{\partial x^j}
\left(
\frac{\partial L}{\partial y_j^\alpha }
\circ j^1s
\right) (x).
\]

The linearisation of $\mathcal{E}^\Lambda $
at $s\in \Gamma (p)$ is the operator
\[
\mathcal{L}_s\mathcal{E}^\Lambda
\colon T_s\Gamma (p)
=\Gamma
\left(
M,s^\ast V(p)
\right)
\to \Omega ^1(E/M)\otimes _{C^\infty (M)}
\Omega ^n(M)
\]
defined as follows.

If $S\colon(-\varepsilon ,\varepsilon )
\times U\to E$ is a one-parameter family
of sections, i.e.,
$p\circ S_t=\mathrm{id}_U$, $|t|<\varepsilon $,
then a vector field $X\in \Gamma (U,s^\ast V(p))$
along $s=S_0$---called the `initial velocity'
of $S$---is defined to be the tangent vector
$X(x)\in V_{s(x)}(p)$ at $t=0$ to the curve
$t\mapsto S_t(x)$, which takes values in the fibre
$p^{-1}(x)$ for every point $x\in U$.
Expanding $y^\alpha \circ S_t$ up to second order,
we obtain $y^\alpha \circ S_t
=y^\alpha \circ s+tv^\alpha +t^2f^\alpha $,
for certain functions $v^\alpha \in C^\infty (U)$,
$f^\alpha
\in C^\infty ((-\varepsilon ,\varepsilon )
\times U)$.
Hence,
\[
X(x)=v^\alpha (x)
\left(
\partial /\partial y^\alpha
\right) _{s(x)},
\quad
\forall x\in U.
\]
Therefore, every vector field
$X\in \Gamma (U,s^\ast V(p))$
is the initial velocity of a one-parameter family
of sections $S$, and we define,
\[
\mathcal{L}_s\mathcal{E}^\Lambda(X)
=\left.
\frac{\partial }{\partial t}\right\vert _{t=0}
\mathcal{E}^\Lambda
\left(
S_t
\right) .
\]
The definition makes sense as it does not depend
on the particular one-parameter family of sections
chosen. In fact,
\begin{align*}
\left.
\frac{\partial }{\partial t}\right\vert _{t=0}
\left(
\mathcal{E}_\alpha ^\Lambda \circ j^2S_t
\right)
& =\left(
\frac{\partial \mathcal{E}_\alpha ^\Lambda }
{\partial y^\beta }\circ j^2s
\right)
v^\beta
+\left(
\frac{\partial\mathcal{E}_\alpha ^\Lambda }
{\partial y_j^\beta }\circ j^2s
\right)
\frac{\partial v^\beta }{\partial x^j} \\
& +\left(
\frac{\partial \mathcal{E}_\alpha ^\Lambda }
{\partial y_{jk}^\beta }\circ j^2s
\right)
\frac{\partial ^2v^\beta }
{\partial x^j\partial x^k}.
\end{align*}
This expression also shows that
$\mathcal{L}_s\mathcal{E}^\Lambda (X)$
depends linearly on $X$.

\begin{definition}
A vector field $X\in \Gamma (M,s^\ast V(p))$
defined along an extremal $s$ of a Lagrangian
density $\Lambda $ on $p\colon E\to M$ is said
to be a \emph{Jacobi field} if
$X\in \ker \mathcal{L}_s\mathcal{E}^\Lambda $.
\end{definition}

\begin{proposition}
A vector field $X\in \Gamma (M,s^\ast V(p))$
defined along an extremal $s$ of a Lagrangian
density $\Lambda $ on $p\colon E\to M$
is a Jacobi field if and only if
the following equation holds:
\begin{equation}
(j^1s)^\ast
\left(
i_YL_{\tilde{X}^{(1)}}d\Theta _\Lambda
\right)
=0,\quad \forall Y\in \mathfrak{X}^v(J^1E),
\label{Jacobi}
\end{equation}
where $\tilde{X}\in \mathfrak{X}^v(E)$
is an arbitrary $p$-vertical extension of $X$.
\end{proposition}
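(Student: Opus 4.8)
The plan is to realise the left-hand side of \eqref{Jacobi} as a $t$-derivative along a one-parameter family generated by $\tilde{X}$, and then to identify that derivative with the linearisation $\mathcal{L}_s\mathcal{E}^\Lambda(X)$ by means of the explicit computation of $d\Theta_\Lambda$ already recorded in the proof of Lemma \ref{L1}. Since $\tilde{X}$ is $p$-vertical, its flow $\Phi_t$ covers the identity on $M$, so $S_t=\Phi_t\circ s$ is a one-parameter family of sections whose initial velocity is $X=\tilde{X}\circ s$; hence $\mathcal{L}_s\mathcal{E}^\Lambda(X)=\frac{\partial}{\partial t}\big|_{t=0}\mathcal{E}^\Lambda(S_t)$ by the very definition of the linearisation, independently of the chosen extension. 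Because $\phi_t=\mathrm{id}_M$, the prolongation satisfies $j^1S_t=\Phi_t^{(1)}\circ j^1s$, and $\tilde{X}^{(1)}$ is the infinitesimal generator of $\Phi_t^{(1)}$; consequently, for every form $\omega$ on $J^1E$ one has $(j^1s)^\ast L_{\tilde{X}^{(1)}}\omega=\frac{\partial}{\partial t}\big|_{t=0}(j^1S_t)^\ast\omega$.

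First I would apply this to $\omega=i_Yd\Theta_\Lambda$, using the Cartan-type identity $L_{\tilde{X}^{(1)}}(i_Yd\Theta_\Lambda)=i_YL_{\tilde{X}^{(1)}}d\Theta_\Lambda+i_{[\tilde{X}^{(1)},Y]}d\Theta_\Lambda$. The key observation is that $\tilde{X}^{(1)}$ is $p_1$-vertical (the prolongation of a $p$-vertical field has a $p_1$-vertical prolongation, as is visible in \eqref{i.c.t.}) and $Y$ is $p_1$-vertical by hypothesis, so the bracket $[\tilde{X}^{(1)},Y]$ is again $p_1$-vertical. Since $s$ is an extremal, the computation behind Lemma \ref{L1} specialised to the holonomic section $\bar{s}=j^1s$ — for which $(j^1s)^\ast\theta^\beta=0$ and $\mathcal{E}_\alpha^\Lambda\circ j^2s=0$ — yields $(j^1s)^\ast(i_Zd\Theta_\Lambda)=0$ for every $Z\in\mathfrak{X}^v(J^1E)$; applying this to $Z=[\tilde{X}^{(1)},Y]$ kills the bracket term and leaves $(j^1s)^\ast\,i_YL_{\tilde{X}^{(1)}}d\Theta_\Lambda=\frac{\partial}{\partial t}\big|_{t=0}(j^1S_t)^\ast(i_Yd\Theta_\Lambda)$.

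It remains to evaluate the right-hand side. Writing $Y=Y^\alpha\partial/\partial y^\alpha+Y_i^\alpha\partial/\partial y_i^\alpha$ and using that $j^1S_t$ is holonomic, the formulas of Lemma \ref{L1} give $(j^1S_t)^\ast(i_{\partial/\partial y_i^\alpha}d\Theta_\Lambda)=0$ and $(j^1S_t)^\ast(i_{\partial/\partial y^\alpha}d\Theta_\Lambda)=(\mathcal{E}_\alpha^\Lambda\circ j^2S_t)\mathbf{v}$, whence $(j^1S_t)^\ast(i_Yd\Theta_\Lambda)=(Y^\alpha\circ j^1S_t)(\mathcal{E}_\alpha^\Lambda\circ j^2S_t)\mathbf{v}$. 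Differentiating at $t=0$ and invoking once more that $s$ is an extremal (so $\mathcal{E}_\alpha^\Lambda\circ j^2s=0$, which annihilates the contribution of $\frac{\partial}{\partial t}(Y^\alpha\circ j^1S_t)$), only the term carrying the $t$-derivative of $\mathcal{E}_\alpha^\Lambda\circ j^2S_t$ survives, so that, in view of the previous paragraph, $(j^1s)^\ast\,i_YL_{\tilde{X}^{(1)}}d\Theta_\Lambda=(Y^\alpha\circ j^1s)\bigl(\mathcal{L}_s\mathcal{E}^\Lambda(X)\bigr)_\alpha\,\mathbf{v}$, where $(\cdot)_\alpha$ denotes the component along $d_{E/M}y^\alpha\otimes\mathbf{v}$.

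The equivalence is then immediate: if $X$ is a Jacobi field then $\mathcal{L}_s\mathcal{E}^\Lambda(X)=0$ and the master identity forces \eqref{Jacobi} for every $Y$; conversely, taking $Y=\partial/\partial y^\alpha$ for each index $\alpha$ recovers every component $\bigl(\mathcal{L}_s\mathcal{E}^\Lambda(X)\bigr)_\alpha$, so \eqref{Jacobi} forces $X\in\ker\mathcal{L}_s\mathcal{E}^\Lambda$. The same identity also shows that the left-hand side of \eqref{Jacobi} depends only on $X$ and not on the chosen $p$-vertical extension $\tilde{X}$. I expect the main obstacle to be the careful bookkeeping of the two places where the extremal hypothesis enters — the vanishing of $i_{[\tilde{X}^{(1)},Y]}d\Theta_\Lambda$ after pullback and the vanishing of the $\frac{\partial}{\partial t}(Y^\alpha)$ contribution — together with verifying that $[\tilde{X}^{(1)},Y]$ is genuinely $p_1$-vertical, which legitimises the first reduction.
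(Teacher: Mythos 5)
Your proposal is correct and follows essentially the same route as the paper's own proof: the Cartan-type identity splitting $i_YL_{\tilde{X}^{(1)}}d\Theta_\Lambda$ from $L_{\tilde{X}^{(1)}}(i_Yd\Theta_\Lambda)$, the extremality of $s$ killing the bracket term, and the flow computation identifying $\partial/\partial t|_{t=0}(j^1S_t)^\ast(i_Yd\Theta_\Lambda)$ with $\mathcal{L}_s\mathcal{E}^\Lambda(X)$ contracted with $Y$. Your handling of the product-rule term $(\mathcal{E}_\alpha^\Lambda\circ j^2s)\,\partial_t(Y^\alpha\circ j^1S_t)$ is in fact slightly more explicit than the paper's.
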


\begin{proof}
The equation \eqref{Jacobi} does not depend
on the vertical extension chosen. In fact,
\begin{equation}
(j^1s)^\ast
\left(
i_YL_{\tilde{X}^{(1)}}d\Theta _\Lambda
\right)
=(j^1s)^\ast
\left(
L_{\tilde{X}^{(1)}}
\left(
i_Yd\Theta _\Lambda
\right)
\right)
+(j^1s)^\ast
\bigl(
i_{[Y,\tilde{X}^{(1)}]}
d\Theta _\Lambda \bigr) ,\label{formula}
\end{equation}
and the second term on the right-hand side
vanishes, as $s$ is an extremal. Hence
\[
(j^1s)^\ast
\left(
i_YL_{\tilde{X}^{(1)}}d\Theta _\Lambda
\right)
=d\left(
(j^1s)^\ast
\left(
i_{\tilde{X}^{(1)}}i_Yd\Theta _\Lambda
\right)
\right)
+(j^1s)^\ast
\left(
i_{\tilde{X}^{(1)}}
d\left(
i_Y\Theta _\Lambda
\right)
\right) ,
\]
and we conclude by simply applying Lemma \ref{l1}.

Moreover, from the formula \eqref{d_Theta}
the following identity is obtained:
\[
d\Theta _\Lambda
=\mathcal{E}_\alpha ^\Lambda \theta ^\alpha
\wedge \mathbf{v}
+(-1)^i\frac{\partial ^2L}
{\partial y_i^\alpha \partial y^\beta }
\theta ^\alpha \wedge \theta ^\beta
\wedge \mathbf{v}_i
+(-1)^i\frac{\partial ^2L}
{\partial y_i^\alpha \partial y_j^\beta }
\theta ^\alpha \wedge \theta _j^\beta
\wedge \mathbf{v}_i,
\]
where $\theta _j^\beta
=dy_j^\beta -y_{(jk)}^\beta dx^k$.
If $\tilde{\Phi }_t$ is the flow of $\tilde{X}$,
then
$j^1S_t=J^1(\tilde{\Phi }_t)\circ j^1s$
and from the previous formula we have
\begin{align*}
\left(
j^1s
\right) ^\ast
\left\{
J^1(\tilde{\Phi }_t)^\ast
\left(
i_Yd\Theta _\Lambda
\right)
\right\}
& =\left(
j^1S_t
\right)
^\ast
\left(
i_Yd\Theta _\Lambda
\right) \\
& =\left(
\mathcal{E}_\alpha ^\Lambda \circ j^2S_t
\right)
\theta ^\alpha (Y)\mathbf{v}.
\end{align*}
Taking derivatives with respect to $t$
at $t=0$ in this formula, we have
\[
(j^1s)^\ast
\left(
L_{\tilde{X}^{(1)}}
\left(
i_Yd\Theta _\Lambda
\right)
\right)
=\mathcal{L}_s\mathcal{E}^\Lambda (X)
\mathbf{v},
\]
and we can conclude by simply applying
the formula \eqref{formula} recalling
that $s$ is an extremal.
\end{proof}

Let $\mathcal{S}_\Lambda $ (resp.\
$\mathcal{\bar{S}}_\Lambda $) denote the set
of solutions to the E-L equations
(resp.\ H-C equations) attached
to a Lagrangian density $\Lambda=L\mathbf{v}$,
$L\in C^\infty (J^1E)$.

\begin{remark}
If $S_t\in \mathcal{S}_\Lambda $ is a one-parameter
family of extremals, then its initial velocity $X$
is readily seen to be a Jacobi field along
the extremal $s=S_0$; in this case, $X$ is said
to be `integrable' (e.g., see
\cite[Definition 1.2]{LeW}, \cite[Section 2.6]{LiW}).
Although important examples of non-integrable Jacobi
fields exist, usually Jacobi fields
along $s$ are considered as the tangent space
$T_s\mathcal{S}_\Lambda $ at an extremal
$s\in \mathcal{S}_\Lambda $ to the `manifold'
of solutions to the E-L equations for $\Lambda $.
By the same token, we give the following
\end{remark}

\begin{definition}
The tangent space
$T_{\bar{s}}\mathcal{\bar{S}}_\Lambda $
at a solution
$\bar{s}\in \mathcal{\bar{S}}_\Lambda $
to the `manifold' of solutions to the H-C equations
for $\Lambda $ is defined to be the space of vector
fields $\bar{X}\in \Gamma (M,\bar{s}^\ast V(p_1))$
that satisfy the equation
\[
\bar{s}^\ast
\left(
i_YL_{\tilde{X}}d\Theta _\Lambda
\right)
=0,
\quad
\forall Y\in \mathfrak{X}^v(J^1E),
\]
where $\tilde{X}\in \mathfrak{X}^v(J^1E)$
is a $p_1$-vertical extension of $\bar{X}$.
\end{definition}

\subsection{The embedding
$T_s\mathcal{S}_\Lambda \hookrightarrow
T_{j^1s}\mathcal{\bar{S}}_\Lambda $}

\begin{proposition}
\label{p2}
For every $s\in \mathcal{S}_\Lambda $,
there is an embedding
\begin{align*}
T_s\mathcal{S}_\Lambda
& \hookrightarrow
T_{j^1s}\mathcal{\bar{S}}_\Lambda , \\
X & \mapsto \tilde{X}^{(1)}\circ j^1s,
\end{align*}
where $\tilde{X}$ is any $p$-vertical extension
of $X\in \Gamma (M,s^{\ast }V(p))$ to $E$.
If $\Lambda $ is regular, then
$T_s\mathcal{S}_\Lambda
\cong T_{j^1s}\mathcal{\bar{S}}_\Lambda $,
$\forall s\in \mathcal{S}_\Lambda $.
\end{proposition}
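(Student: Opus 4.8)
The plan is to verify in turn that the assignment $X\mapsto\tilde{X}^{(1)}\circ j^1s$ is well defined, that it lands in $T_{j^1s}\mathcal{\bar{S}}_\Lambda$, that it is injective, and—under regularity—that it is onto. First I would note that, since $\tilde{X}$ is $p$-vertical, its prolongation $\tilde{X}^{(1)}$ is $p_1$-vertical, so $\tilde{X}^{(1)}\circ j^1s$ is a genuine element of $\Gamma(M,(j^1s)^\ast V(p_1))$. Independence of the chosen extension follows at once from Lemma \ref{l1}, since $\tilde{X}^{(1)}(j_x^1s)$ depends only on $j_x^1(\tilde{X}\circ s)=j_x^1X$; the same observation shows the map is $\mathbb{R}$-linear. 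To see that $\bar{X}:=\tilde{X}^{(1)}\circ j^1s$ lies in $T_{j^1s}\mathcal{\bar{S}}_\Lambda$, I would test the defining condition of that space with the particular extension $\tilde{X}^{(1)}$ of $\bar{X}$: this gives exactly $(j^1s)^\ast(i_YL_{\tilde{X}^{(1)}}d\Theta_\Lambda)=0$, $\forall Y$, which is equation \eqref{Jacobi} and hence holds because $X$ is a Jacobi field. Since the defining condition of $T_{j^1s}\mathcal{\bar{S}}_\Lambda$ is independent of the $p_1$-vertical extension—by the same Cartan-calculus argument as in the preceding Proposition, the term $\bar{s}^\ast(i_{[Y,\tilde{X}]}d\Theta_\Lambda)$ vanishing because $j^1s$ solves \eqref{HC}—this is enough.

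For injectivity I would use that $\tilde{X}^{(1)}$ is $p_{10}$-projectable onto $\tilde{X}$, so that $(p_{10})_\ast\bar{X}$ recovers $X$; equivalently, by \eqref{i.c.t.}, the $\partial/\partial y^\alpha$-component of $\tilde{X}^{(1)}\circ j^1s$ is $v^\alpha$. Thus $\bar{X}=0$ forces $X=0$, and the linear map is an embedding.

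The substantial step is surjectivity when $\Lambda$ is regular, and this is where the hypothesis enters. Writing a general $\bar{X}\in T_{j^1s}\mathcal{\bar{S}}_\Lambda$ as $v^\alpha\partial/\partial y^\alpha+w_i^\alpha\partial/\partial y_i^\alpha$ along $j^1s$, I would identify its defining condition with the linearization at $\bar{s}=j^1s$ of the H-C equations of Lemma \ref{L1} (the linearized counterpart of the identity $(j^1s)^\ast(L_{\tilde{X}^{(1)}}(i_Yd\Theta_\Lambda))=\mathcal{L}_s\mathcal{E}^\Lambda(X)\mathbf{v}$ established above). Because $s_j^\beta-\bar{s}_j^\beta$ vanishes at $j^1s$, linearizing the first family $(s_j^\beta-\bar{s}_j^\beta)(\partial^2L/\partial y_i^\alpha\partial y_j^\beta\circ\bar{s})=0$ leaves only
\[
\Bigl(\frac{\partial v^\beta}{\partial x^j}-w_j^\beta\Bigr)\Bigl(\frac{\partial^2L}{\partial y_i^\alpha\partial y_j^\beta}\circ j^1s\Bigr)=0 .
\]
Regularity means the Hessian matrix in the indices $(i\alpha),(j\beta)$ is nonsingular, whence $w_j^\beta=\partial v^\beta/\partial x^j$; that is, $\bar{X}$ is holonomic. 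A direct computation of $\tilde{X}^{(1)}\circ j^1s$ for the field $X=v^\alpha(\partial/\partial y^\alpha)_s$ and any $p$-vertical extension $\tilde{X}$ then reproduces precisely this $\bar{X}$, while feeding $\bar{X}$ back into the defining condition with the extension $\tilde{X}^{(1)}$ returns \eqref{Jacobi}, so $X$ is a Jacobi field and $\bar{X}$ lies in the image.

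I expect the extraction of the holonomy constraint $w_j^\beta=\partial v^\beta/\partial x^j$ from the nonsingularity of the Hessian to be the heart of the matter, exactly mirroring at the infinitesimal level the fact that for regular $\Lambda$ every H-C solution is holonomic. The remaining bookkeeping—$p_1$-verticality of prolongations, independence of extensions, and recovery of $X$ from $\bar{X}$—is routine given Lemma \ref{l1} and the preceding Proposition.
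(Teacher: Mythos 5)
Your argument is correct, and it isolates the same key mechanism as the paper: along the holonomic section $j^1s$ the only surviving part of the linearized first Hamilton--Cartan equation is $\bigl(\partial ^2L/\partial y_i^\alpha \partial y_j^\beta \circ j^1s\bigr)\bigl(\partial v^\beta /\partial x^j-v_j^\beta \bigr)=0$, and regularity of the Hessian then forces $\bar{X}$ to be holonomic, $\bar{X}=X^{(1)}$. But you reach that equation by a genuinely shorter route. The paper first carries out the ``straightforward but rather long'' coordinate computation of the condition $\bar{s}^\ast (i_YL_{\tilde{X}}d\Theta _\Lambda )=0$ for an \emph{arbitrary} solution $\bar{s}$, producing the general equations \eqref{Jacobi_I}--\eqref{Jacobi_II}, and only afterwards specializes to $\bar{s}=j^1s$ to get \eqref{Jacobi_II_hol}; you instead linearize the equations of Lemma \ref{L1} directly at the holonomic point, where the factors $s_j^\beta -\bar{s}_j^\beta $ vanish and kill all third-derivative terms from the outset. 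Similarly, for the embedding direction the paper relies on the same coordinate equations (together with Remark \ref{remm}), whereas you argue intrinsically from the characterization \eqref{Jacobi} of Jacobi fields, Lemma \ref{l1}, and the Cartan-calculus independence of the vertical extension. Your version is cleaner for proving Proposition \ref{p2} itself; what the paper's longer computation buys is the explicit non-holonomic equations \eqref{Jacobi_I}--\eqref{Jacobi_II}, which are reused essentially verbatim in the proof of Theorem \ref{T2}, where the solutions $\bar{s}$ of interest are genuinely non-holonomic. If you write your argument up, make explicit the one step you compress: that testing the defining condition of $T_{\bar{s}}\mathcal{\bar{S}}_\Lambda $ on the coordinate fields $Y=\partial /\partial y^\alpha ,\partial /\partial y_i^\alpha $ really does yield the $t$-derivative at $t=0$ of the expressions of Lemma \ref{L1} along $\tilde{\Phi }_t\circ \bar{s}$ (this is where the vanishing of the commutator term, hence the hypothesis that $j^1s$ solves \eqref{HC}, is used), and that $\tilde{X}^{(1)}\circ j^1s$ has components $(v^\alpha ,\partial v^\alpha /\partial x^i)$ by the chain rule applied to \eqref{i.c.t.}.
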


\begin{proof}
As a straightforward---but rather
long---computation shows, a vector field
$\bar{X}\in \Gamma (M,\bar{s}^\ast V(p_1))$
with local expression
\[
\bar{X}=v^\alpha
\left.
\frac{\partial }{\partial y^\alpha }
\right| _{\bar{s}}
+v_i^\alpha
\left.
\frac{\partial }{\partial y_i^\alpha }
\right| _{\bar{s}},
\quad
v^\alpha ,v_i^\alpha \in C^\infty (M),
\]
belongs to $T_{\bar{s}}\mathcal{\bar{S}}_\Lambda $
if and only if the following two equations hold:

\begin{equation}
\begin{array}{cl}
0= &
\left(
\dfrac{\partial ^2L}
{\partial y^\alpha \partial y^\sigma }
\circ \bar{s}
-\dfrac{\partial ^3L}
{\partial x^i\partial y^\alpha
\partial y_i^\sigma }
\circ \bar{s}
+\Bigl(
\dfrac{\partial s^\beta }
{\partial x^i}-\bar{s}_i^\beta
\Bigr)
\Bigl(
\dfrac{\partial ^3L}
{\partial y^\alpha \partial y^\sigma
\partial y_i^\beta }
\circ \bar{s}
\Bigr)
\right. \\
& \left.
-\dfrac{\partial s^\gamma }{\partial x^j}
\Bigl(
\dfrac{\partial ^3L}
{\partial y^\alpha \partial y^\gamma
\partial y_j^\sigma }
\circ \bar{s}
\Bigr)
-\dfrac{\partial \bar{s}_h^\gamma }
{\partial x^j}
\Bigl(
\dfrac{\partial ^3L}
{\partial y^\alpha \partial y_h^\gamma
\partial y_j^\sigma }
\circ \bar{s}
\Bigr)
\right)
v^\alpha \\
& +\Bigl(
\dfrac{\partial ^2L}
{\partial y^\sigma
\partial y_i^\alpha }
\circ \bar{s}
-\dfrac{\partial ^2L}
{\partial y^\alpha
\partial y_i^\sigma }
\circ \bar{s}
\Bigr)
\dfrac{\partial v^\alpha }
{\partial x^i}
-\Bigl(
\dfrac{\partial ^2L}
{\partial y_j^\alpha
\partial y_i^\sigma }
\circ \bar{s}
\Bigr)
\dfrac{\partial v_j^\alpha }
{\partial x^i} \\
& +\left(
\Bigl(
\dfrac{\partial s^\beta }
{\partial x^j}
-\bar{s}_j^\beta
\Bigr)
\Bigl(
\dfrac{\partial ^3L}
{\partial y^\sigma
\partial y_i^\alpha
\partial y_j^\beta }
\circ \bar{s}
\Bigr)
-\dfrac{\partial ^3L}
{\partial x^j\partial y_i^\alpha
\partial y_j^\sigma }
\circ \bar{s}
\right. \\
& \left.
-\dfrac{\partial s^\beta }{\partial x^j}
\Bigl(
\dfrac{\partial ^3L}
{\partial y^\beta \partial y_i^\alpha
\partial y_j^\sigma }
\circ \bar{s}
\Bigr)
-\dfrac{\partial \bar{s}_h^\beta }
{\partial x^j}
\Bigl(
\dfrac{\partial ^3L}
{\partial y_h^\beta \partial y_i^\alpha
\partial y_j^\sigma }
\circ \bar{s}
\Bigr)
\right)
v_i^\alpha ,
\end{array}
\label{Jacobi_I}
\end{equation}
\begin{equation}
\begin{array}
[c]{cl}
0= &
\Bigl(
\dfrac{\partial ^2L}
{\partial y_i^\alpha \partial y_j^\sigma }
\circ \bar{s}
\Bigr)
\Bigl(
\dfrac{\partial v^\alpha }
{\partial x^i}
-v_i^\alpha
\Bigr)
+\Bigl(
\dfrac{\partial s^\alpha }
{\partial x^i}
-\bar{s}_i^\alpha
\Bigr)
\Bigl(
\dfrac{\partial ^3L}
{\partial y^\beta \partial y_i^\alpha
\partial y_j^\sigma }
\circ\bar{s}
\Bigr)
v^\beta
\medskip \\
& +\Bigl(
\dfrac{\partial s^\alpha }
{\partial x^i}
-\bar{s}_i^\alpha
\Bigr)
\Bigl(
\dfrac{\partial ^3L}
{\partial y_i^\alpha
\partial y_k^\beta
\partial y_j^\sigma }
\circ\bar{s}
\Bigr)
v_k^\beta ,
\end{array}
\label{Jacobi_II}
\end{equation}
where $s=p_{10}\circ \bar{s}$,
$s^\alpha =y^\alpha \circ\bar{s}
=y^\alpha \circ s$, and
$\bar{s}_i^\alpha
=y_i^\alpha \circ \bar{s}$.
Along a holonomic section
$\bar{s}=j^1s$, the equations
\eqref{Jacobi_I} and \eqref{Jacobi_II}
become respectively,

\begin{equation}
\begin{array}{ll}
0= &
\left(
\dfrac{\partial ^2L}
{\partial y^\alpha \partial y^\sigma }
\circ j^1s
-\dfrac{\partial ^3L}
{\partial x^i\partial y^\alpha
\partial y_i^\sigma }
\circ j^1s
-\dfrac{\partial s^\gamma }{\partial x^j}
\Bigl(
\dfrac{\partial ^3L}
{\partial y^\alpha
\partial y^\gamma
\partial y_j^\sigma }
\circ j^1s
\Bigr)
\right. \\
& \left.
-\dfrac{\partial ^2s^\gamma }
{\partial x^h\partial x^j}
\Bigl(
\dfrac{\partial ^3L}
{\partial y^\alpha
\partial y_h^\gamma
\partial y_j^\sigma }
\circ j^1s
\Bigr)
\right)
v^\alpha \\
& +\Bigl(
\dfrac{\partial ^2L}
{\partial y^\sigma
\partial y_i^\alpha }
\circ j^1s
-\dfrac{\partial ^2L}
{\partial y^\alpha
\partial y_i^\sigma }
\circ j^1s
\Bigr)
\dfrac{\partial v^\alpha }
{\partial x^i}
-\Bigl(
\dfrac{\partial ^2L}
{\partial y_j^\alpha
\partial y_i^\sigma }
\circ j^1s
\Bigr)
\dfrac{\partial v_j^\alpha }
{\partial x^i} \\
& -\left(
\dfrac{\partial ^3L}
{\partial x^j\partial y_i^\alpha
\partial y_j^\sigma }
\circ j^1s
+\dfrac{\partial s^\beta }
{\partial x^j}
\Bigl(
\dfrac{\partial ^3L}
{\partial y^\beta
\partial y_i^\alpha
\partial y_j^\sigma }
\circ j^1s
\Bigr)
\right. \\
& \left.
+\dfrac{\partial ^2s^\beta }
{\partial x^h\partial x^j}
\Bigl(
\dfrac{\partial ^3L}
{\partial y_i^\alpha
\partial y_h^\beta
\partial y_j^\sigma }
\circ j^1s
\Bigr)
\right)
v_i^\alpha ,
\end{array}
\label{Jacobi_I_hol}
\end{equation}
\begin{equation}
0=\Bigl(
\dfrac{\partial ^2L}
{\partial y_i^\alpha \partial y_j^\sigma }
\circ j^1s
\Bigr)
\Bigl(
\dfrac{\partial v^\alpha }{\partial x^i}
-v_i^\alpha
\Bigr) .
\label{Jacobi_II_hol}
\end{equation}

In addition, if $L$ is regular,
then the equation \eqref{Jacobi_II_hol}
is equivalent to saying that
$v_i^\alpha
=\partial v^\alpha /\partial x^i$,
and from the formula \eqref{i.c.t.}
and Lemma \ref{l1} we conclude that
$\bar{X}$ is the $1$-jet prolongation
of a Jacobi field, i.e., $\bar{X}=X^{(1)}$.
\end{proof}

\begin{remark}
\label{remm}
From the formula \eqref{Jacobi_I_hol} we deduce that
a vector field
\begin{align*}
X  & \in \Gamma (M,s^\ast V(p)), \\
X  & =v^\alpha
\left.
\frac{\partial }{\partial y^\alpha }
\right| _s,
\quad
v^\alpha \in C^\infty (M),
\end{align*}
belongs to $T_s\mathcal{S}_\Lambda $
if and only if the following equations
hold:

\begin{equation*}
\begin{array}{ll}
0= & \!\!
\left(
\dfrac{\partial ^2L}
{\partial y^\alpha \partial y^\sigma }
\circ j^1s
-\dfrac{\partial ^3L}
{\partial x^i\partial y^\alpha
\partial y_i^\sigma }
\circ j^1s
-\dfrac{\partial s^\gamma }
{\partial x^j}
\Bigl(
\dfrac{\partial ^3L}
{\partial y^\alpha \partial y^\gamma
\partial y_j^\sigma }
\circ j^1s
\Bigr)
\right. \\
& \! \!
\left.
-\dfrac{\partial ^2 s^\gamma }
{\partial x^h\partial x^j}
\Bigl(
\dfrac{\partial ^3L}
{\partial y^\alpha \partial y_h^\gamma
\partial y_j^\sigma }
\circ j^1s
\Bigr)
\right)
v^\alpha \\
& \!\! +\left(
\! \dfrac{\partial ^2 L}
{\partial y^\sigma
\partial y_i^\alpha }
\circ j^1 s
-\dfrac{\partial ^2 L}
{\partial y^\alpha
\partial y_i^\sigma }
\circ j^1 s
-\dfrac{\partial ^3L}
{\partial x^j\partial y_i^\alpha
\partial y_j^\sigma }
\circ j^1 s
\right. \\
& \!\!
\left.
-\dfrac{\partial s^\beta }
{\partial x^j}
\Bigl(
\dfrac{\partial ^3L}
{\partial y^\beta \partial y_i^\alpha
\partial y_j^\sigma }
\circ j^1s
\Bigr)
-\dfrac{\partial ^2 s^\beta }
{\partial x^h\partial x^j}
\Bigl(
\dfrac{\partial ^3L}
{\partial y_i^\alpha \partial y_h^\beta
\partial y_j^\sigma }
\circ j^1s
\Bigr)
\! \right)
\dfrac{\partial v^\alpha }
{\partial x^i} \\
& \!\!
-\Bigl(
\dfrac{\partial ^2 L}
{\partial y_j^\alpha \partial y_i^\sigma }
\circ j^1s
\Bigr)
\dfrac{\partial ^2 v^\alpha }
{\partial x^i\partial x^j}.
\end{array}
\end{equation*}
\end{remark}

\section{H-C equations
and gauge-invariance\label{HCeqs_GaugeInv}}

\subsection{Gauge-invariant
Lagrangians\label{Gauge_invariant_Lag}}

\begin{definition}
A smooth function $L\colon J^1C\to \mathbb{R}$
is said to be \emph{gauge invariant} if
$L\circ\Phi _C^{(1)}=L$ for every
$\Phi \in \mathrm{Gau}P$.
\end{definition}

This equation obviously implies $X_C^{(1)}L=0$
for every $X\in \mathrm{gau}P$.
The converse also holds if the group $G$
is connected. As every $\Phi \in \mathrm{Gau}P$
induces the identity map on $M$, the function $L$
is gauge invariant if and only if the gauge group
is a group of symmetries for the Lagrangian
density $\Lambda =L\mathbf{v}$, where $\mathbf{v}$
is an arbitrary volume form on the base manifold.

Let
\begin{equation}
\begin{array}
[c]{l}
\Omega \colon J^1C
\to \bigwedge \nolimits ^2T^\ast M
\otimes \mathrm{ad}P \\
\Omega (j_x^1\sigma _\Gamma )
=\left(
\Omega _\Gamma
\right) _x
\end{array}
\label{curvature}
\end{equation}
be the curvature map. Here, the curvature form
$\Omega _\Gamma $ of the connection $\Gamma $
corresponding to a section $s_\Gamma $
of $p$, is seen to be a two form on $M$
with values in the adjoint bundle $\mathrm{ad}P$.
On the vector bundle
$\bigwedge \nolimits ^2T^\ast M\otimes\mathrm{ad}P$
we consider the coordinate systems $(x^i;R_{jk}^\alpha )$,
$j<k$, induced by a coordinate system $(U;x^i)$ on $M$,
and a basis $(B_\alpha )$ of $\mathfrak{g}$, as follows:
\[
\eta _2=\sum _{j<k}
\left(
R_{jk}^\alpha (\eta _2)dx^j\wedge dx^k
\otimes\tilde{B}_\alpha
\right) _x,
\quad
\forall \eta _2\in \bigwedge \nolimits ^2T_x^\ast M
\otimes(\mathrm{ad}P)_x.
\]

The geometric formulation of Utiyama's Theorem
(e.g., see \cite{Bl}) states that a Lagrangian
$L\colon J^1C\to \mathbb{R}$ is gauge invariant
if and only $L$ factors through
$\Omega $ as $L=\tilde{L}\circ \Omega $, where
\begin{equation}
\tilde{L}\colon \bigwedge \nolimits ^2T^\ast M
\otimes \mathrm{ad}P
\to \mathbb{R} \label{Lbarra}
\end{equation}
is a differentiable function which is invariant under
the adjoint representation of $G$ on the curvature
bundle. As the curvature map \eqref{curvature}
is surjective, the function $\tilde{L}$ is unique.

\subsection{Projecting
$\mathcal{\bar{S}}_\Lambda $ onto
$\mathcal{S}_\Lambda $\label{projecting}}

From Lemma \ref{L1} in Section \ref{sectionHC},
we readily obtain

\begin{proposition}
\label{p1}
The H-C equations of a Lagrangian $L$ on the bundle
of connections $p\colon C\to M$ of a principal bundle
$\pi \colon P\to M$, read as follows:
\[
\bigl(
s_{h,j}^\beta -\bar{s}_{h,j}^\beta
\bigr)
\Bigl(
\dfrac{\partial ^2L}
{\partial A_{i,k}^\alpha \partial A_{h,j}^\beta }
\circ\bar{s}
\Bigr)
=0,\quad
\forall \alpha ,i,k,
\]
\[
-\dfrac{\partial }{\partial x^j}
\Bigl(
\dfrac{\partial L}{\partial A_{i,j}^\alpha }
\circ \bar{s}
\Bigr)
+\dfrac{\partial L}{\partial A_i^\alpha }
\circ \bar{s}
+\bigl(
s_{h,j}^\beta -\bar{s}_{h,j}^\beta
\bigr)
\Bigl(
\dfrac{\partial ^2L}
{\partial A_i^\alpha \partial A_{h,j}^\beta }
\circ \bar{s}
\Bigr)
=0,
\quad
\forall \alpha ,i,
\]
where $\bar{s}\colon M\to J^1C$ is a section
of $p_1\colon J^1C\to M$, and we have set
\[
s_i^\alpha
=A_i^\alpha \circ p_{10}\circ \bar{s},
\quad
s_{i,j}^\alpha
=\frac{\partial s_i^\alpha }{\partial x^j},
\quad
\bar{s}_{i,j}^\alpha
=A_{i,j}^\alpha \circ \bar{s}.
\]

\end{proposition}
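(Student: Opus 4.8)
The plan is to apply Lemma \ref{L1} directly to the fibred manifold $p\colon C\to M$, reading off the Hamilton-Cartan equations in the coordinate system $(x^i,A_i^\alpha ,A_{i,j}^\alpha )$ on $J^1C$ introduced in Section \ref{BundleOfConnections}. The only genuine content is the bookkeeping of indices. By the discussion following the Atiyah sequence, $C$ is an affine bundle modelled over $T^\ast M\otimes \mathrm{ad}P$, whose fibre has dimension $nm$; hence the single fibre index $\alpha $ of Lemma \ref{L1} (there ranging over $1,\dotsc ,m=\dim E-n$) is played here by the composite index $(i,\alpha )$, with $1\leq i\leq n$ and $1\leq \alpha \leq m=\dim G$.

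With this identification the dictionary between generic fibred coordinates and connection coordinates reads $y^\alpha \leftrightarrow A_i^\alpha $ and $y_j^\alpha \leftrightarrow A_{i,j}^\alpha $, and accordingly $s^\alpha \leftrightarrow s_i^\alpha =A_i^\alpha \circ p_{10}\circ \bar{s}$, $s_j^\alpha \leftrightarrow s_{i,j}^\alpha =\partial s_i^\alpha /\partial x^j$, and $\bar{s}_j^\alpha \leftrightarrow \bar{s}_{i,j}^\alpha =A_{i,j}^\alpha \circ \bar{s}$, exactly the notation fixed in the statement. First I would substitute this dictionary into the first equation of Lemma \ref{L1}, replacing the Hessian factor $\partial ^2L/\partial y_i^\alpha \partial y_j^\beta $ by $\partial ^2L/\partial A_{i,k}^\alpha \partial A_{h,j}^\beta $ and the increment $s_j^\beta -\bar{s}_j^\beta $ by $s_{h,j}^\beta -\bar{s}_{h,j}^\beta $; this produces precisely the first displayed equation, now carrying the free indices $\alpha ,i,k$ and summation over $h,j,\beta $. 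Substituting the same dictionary into the second equation of Lemma \ref{L1}, and noting that $\partial L/\partial y^\alpha $ and $\partial L/\partial y_j^\alpha $ become $\partial L/\partial A_i^\alpha $ and $\partial L/\partial A_{i,j}^\alpha $ while $\partial ^2L/\partial y^\alpha \partial y_j^\beta $ becomes $\partial ^2L/\partial A_i^\alpha \partial A_{h,j}^\beta $, yields the second displayed equation, indexed by $\alpha ,i$.

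I expect no real obstacle here: the derivation of Lemma \ref{L1} rests only on the generic fibred coordinate expression \eqref{d_Theta} of $d\Theta _\Lambda $, so it transfers unchanged to $J^1C$ equipped with the chart above, and the proof reduces to the transcription just described. Note in particular that gauge invariance of $L$ plays no role, since the statement concerns an arbitrary Lagrangian on $J^1C$. The sole point demanding care is to keep the connection label $(i,\alpha )$ consistently distinct from the jet-prolongation directions $j,k$, so that the double role of the Latin indices---one indexing the $T^\ast M$ factor in the fibre of $C$, the other indexing jet derivatives---causes no confusion.
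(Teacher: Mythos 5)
Your proposal is correct and matches the paper's own (essentially unwritten) proof: the paper simply states that Proposition \ref{p1} is "readily obtained" from Lemma \ref{L1}, and the content of that step is exactly the index dictionary you describe, with the generic fibre index replaced by the composite index $(i,\alpha)$ of the coordinates $A_i^\alpha$ on $C$. Your added remarks on the dimension count and on gauge invariance being irrelevant here are accurate and only make explicit what the paper leaves tacit.
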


\begin{lemma}
The Hessian metric of a gauge-invariant Lagrangian
$L$ on the bundle of connections $p\colon C\to M$
of a principal bundle $\pi\colon P\to M$, is singular;
i.e.,
\[
\det
\Bigl(
\frac{\partial ^2L}
{\partial A_{i,j}^\alpha \partial A_{k,l}^\beta }
\Bigr) _{\beta ,k,l}^{\alpha ,i,j}
=0.
\]
\end{lemma}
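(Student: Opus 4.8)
The plan is to invoke the geometric form of Utiyama's Theorem stated just above, which factors any gauge-invariant Lagrangian through the curvature map as $L=\tilde{L}\circ\Omega$. The structural fact that does all the work is that, in the coordinate system $(x^i,A_j^\alpha,A_{j,l}^\alpha)$ on $J^1C$, the curvature components depend on the first-order jet coordinates only through a skew combination. Concretely, the local expression of the curvature reads (up to the orientation convention for $\Omega$)
\[
R_{ij}^\alpha
=A_{j,i}^\alpha-A_{i,j}^\alpha
+c_{\beta\gamma}^\alpha A_i^\beta A_j^\gamma,
\]
where the $c_{\beta\gamma}^\alpha$ are the structure constants of $\mathfrak{g}$, so that the derivative coordinates $A_{k,l}^\beta$ enter exclusively through the antisymmetric part $A_{j,i}^\alpha-A_{i,j}^\alpha$. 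The first step is therefore to record this formula, a routine computation from the definition \eqref{curvature} of the curvature of the connection $s_\Gamma$.

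From here the argument is short. Applying the chain rule to $L=\tilde{L}(R_{ij}^\alpha)$ and using $\partial R_{ij}^\alpha/\partial A_{k,l}^\beta=\delta_\beta^\alpha(\delta_j^k\delta_i^l-\delta_i^k\delta_j^l)$, one obtains
\[
\frac{\partial L}{\partial A_{k,l}^\beta}
=-\Bigl(\frac{\partial\tilde{L}}{\partial R_{kl}^\beta}\Bigr)\circ\Omega,
\]
the right-hand factor being understood as the antisymmetric extension in $(k,l)$. In particular $\partial L/\partial A_{k,l}^\beta$ is skew-symmetric under $k\leftrightarrow l$, whence the diagonal entries vanish identically: $\partial L/\partial A_{k,k}^\beta=0$ for every $k$ and $\beta$ (no summation over $k$).

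The conclusion is then immediate. Differentiating the identity $\partial L/\partial A_{k,k}^\beta\equiv0$ with respect to an arbitrary jet coordinate $A_{i,j}^\alpha$ gives $\partial^2L/(\partial A_{i,j}^\alpha\partial A_{k,k}^\beta)=0$ for all indices. Thus the Hessian matrix $(\partial^2L/\partial A_{i,j}^\alpha\partial A_{k,l}^\beta)$, whose columns are indexed by the triples $(\beta,k,l)$, has entire null columns, namely those with $k=l$; hence its determinant vanishes. Equivalently, the Hessian is skew-symmetric in $(i,j)$ and in $(k,l)$, and so annihilates every column vector symmetric in $(k,l)$, of which nonzero ones plainly exist.

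I expect the only genuine work to lie in the first step, i.e.\ pinning down the precise local form of the curvature and verifying that the first-order coordinates occur antisymmetrically; once that is in hand, the singularity of the Hessian is automatic. A secondary point to state carefully is that the $A_{k,l}^\alpha$ are independent jet coordinates, with no a priori relation between $A_{k,l}^\alpha$ and $A_{l,k}^\alpha$, so that the columns with $k=l$ are legitimate columns of the Hessian and the observed antisymmetry is a real constraint rather than a vacuous one.
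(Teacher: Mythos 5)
Your proposal is correct and follows the same route as the paper: factor $L=\tilde{L}\circ\Omega$ via Utiyama's theorem, observe from the local curvature formula that the jet coordinates $A_{k,l}^\beta$ enter only through their antisymmetric part so that $\partial L/\partial A_{k,k}^\beta=0$, and conclude that the Hessian has vanishing columns. The sign convention you adopt for $R_{ij}^\alpha$ is the opposite of the paper's, but as you note this is immaterial to the argument.
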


\begin{proof}
As $L=\tilde{L}\circ \Omega $, we have
$\partial L/\partial A_{i,i}^\alpha =0$,
taking the curvature equations into account,
i.e.,
\[
R_{ij}^\alpha \circ \Omega
=A_{i,j}^\alpha -A_{j,i}^\alpha
-c_{\beta \gamma }^\alpha A_i^\beta A_j^\gamma .
\]
Hence
$\partial ^2L/\partial A_{i,i}^\alpha
\partial A_{k,l}^\beta =0$,
for all indices $\beta,k,l$.
\end{proof}

\begin{definition}\label{regular}
A Lagrangian $L\in C^\infty (J^1C)$
is said to be \emph{regular}
if the Hessian metric $\mathrm{Hess}(L)$
is non-singular.

A gauge-invariant Lagrangian $L\in C^\infty(J^1C)$
is said to be \emph{weakly regular} if the Hessian
metric $\mathrm{Hess}(\tilde{L})$ of the function
in \eqref{Lbarra} associated to $L$ according
to Utiyama's theorem, is non-singular.
\end{definition}

In terms of the coordinate system
$(x^i;R_{jk}^\alpha )$, $j<k$,
on $\bigwedge ^2T^\ast M\otimes \mathrm{ad}P$
introduced in Section \ref{Gauge_invariant_Lag},
this means
\begin{equation}
\det
\Bigl(
\frac{\partial ^2\tilde{L}}
{\partial R_{ij}^\alpha \partial R_{kl}^\beta }
\Bigr) _{\beta ,k<l}^{\alpha ,i<j}\neq 0. \label{wc}
\end{equation}

\begin{remark}
The inequation \eqref{wc} imposes a generic condition
on $\tilde{L}$. In fact, most of gauge-invariant Lagrangians
in the field theory satisfy the weak regularity condition
\eqref{wc}; for example, the general Yang-Mills Lagrangian
on the bundle of connections of a principal bundle $P\to M$
with semisimple Lie group $G$ over a pseudo-Riemannian
manifold $(M,g)$ (even when constructed by using
a non-degenerate adjoint-invariant pairing on the Lie algebra
other than the Cartan-Killing pairing, see \cite{CM2, CMR})
is weakly regular. More generally, any quadratic function
as in \eqref{Lbarra}, which is simultaneously invariant
under the adjoint representation and under the action
of the gauge group of the principal bundle of $g$-orthonormal
linear frames, gives rise to a weakly regular Lagrangian,
see \cite{CM3}. Similarly, Born-Infeld Lagrangians (e.g.,
see \cite{Okawa, Suz}) are also weakly regular. We remark
on the fact that some special Lagrangians are not
weakly regular; for example, if we let the matter field
vanish in the Seiberg-Witten Lagrangian (e.g., see \cite{JPW})
then we obtain a Lagrangian on the bundle of connections,
which is not weakly regular (basically, because it involves
only the self-dual part of the curvature). Finally,
we should also remark that non-gauge invariant Lagrangians
(in the sense of Section \ref{Gauge_invariant_Lag})
may produce gauge invariant actions, as the Chern-Simons
Lagrangian. All of them are not considered below.
\end{remark}

Given an arbitrary fibred manifold $p\colon E\to M$,
we recall that the projection $p_{10}\colon J^1E\to E$
is endowed with an affine-bundle structure modelled over
$T^\ast M\otimes _EV(p)=p^\ast T^\ast M\otimes V(p)$;
see the item (b) in Section \ref{AffineBundles}.
In the particular case of the bundle of connections,
which is itself an affine bundle modelled over
$T^\ast M\otimes\mathrm{ad}P$, we conclude that
$p_{10}\colon J^1C\to  C$ is an affine bundle modelled
over
\begin{equation}
T^\ast M\otimes _CV(p)
=p^\ast (\otimes^2T^\ast M\otimes \mathrm{ad}P).
\label{T*M_otimes_V(p)}
\end{equation}
Hence, sections of $T^\ast M\otimes _CV(p)$
can be considered as $\mathrm{ad}P$-valued covariant
tensors of degree $2$ on $M$ with coefficients in $C$.

\begin{theorem}
\label{th1}
Let $\mathcal{\bar{S}}_\Lambda $
(resp.\ $\mathcal{S}_\Lambda $)
denote the set of solutions to H-C
(resp.\ E-L) equations of a weakly
regular gauge-invariant Lagrangian
$L$ on the bundle of connections
$p\colon C\to M$ of a principal bundle
$\pi \colon P\to M$.
If $\bar{s}\colon M\to J^1C$ belongs
to $\mathcal{\bar{S}}_\Lambda $,
then the section
$s=p_{10}\circ \bar{s}$ belongs
to $\mathcal{S}_\Lambda $.
Hence a natural projection exists
$\varrho \colon \mathcal{\bar{S}}_\Lambda
\to \mathcal{S}_\Lambda $,
$\varrho(\bar{s})=p_{10}\circ\bar{s}$,
which is an affine bundle modelled
as follows:
\[
\varrho ^{-1}(s)
=\left\{
j^1s+t:t\in \Gamma (S^2T^\ast M
\otimes \mathrm{ad}P)
\right\} ,
\quad
\forall s\in \mathcal{S}_\Lambda .
\]

\end{theorem}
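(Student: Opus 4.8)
The plan is to work directly with the two H-C equations written out in Proposition \ref{p1} and to exploit the Utiyama factorization $L=\tilde{L}\circ\Omega$ throughout. I would first introduce the non-holonomicity tensor
\[
M_{h,j}^\beta=s_{h,j}^\beta-\bar{s}_{h,j}^\beta,
\]
which measures the failure of $\bar{s}$ to be holonomic and vanishes precisely when $\bar{s}=j^1s$. Differentiating $L=\tilde{L}\circ\Omega$ by the chain rule, and recalling from the curvature formula $R_{ij}^\alpha\circ\Omega=A_{i,j}^\alpha-A_{j,i}^\alpha-c_{\beta\gamma}^\alpha A_i^\beta A_j^\gamma$ that the constants $\partial R_{kl}^\gamma/\partial A_{i,j}^\alpha=\delta_\alpha^\gamma(\delta_k^i\delta_l^j-\delta_l^i\delta_k^j)$ are antisymmetric in $(i,j)$, every derivative of $L$ in the velocity variables factors through $\partial\tilde{L}/\partial R$ (or $\partial^2\tilde{L}/\partial R\,\partial R$) evaluated at $\Omega$ times such antisymmetric constants.

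The first key step is to show that the first H-C equation forces $M$ to be symmetric in its two lower indices. After the factorization, contracting $M_{h,j}^\beta$ against the antisymmetric constants replaces $M$ by its antisymmetrization $N_{pq}^\gamma=M_{p,q}^\gamma-M_{q,p}^\gamma$, and antisymmetrizing the free pair $(i,k)$ collapses the first H-C equation (for $i<k$) to
\[
\Bigl(\frac{\partial^2\tilde{L}}{\partial R_{ik}^\alpha\partial R_{pq}^\gamma}\circ\Omega\circ\bar{s}\Bigr)N_{pq}^\gamma=0,
\]
summed over $p<q$. The weak-regularity hypothesis \eqref{wc}, i.e.\ nonsingularity of $\mathrm{Hess}(\tilde{L})$, then yields $N_{pq}^\gamma=0$, so $M_{h,j}^\beta=M_{j,h}^\beta$. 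Consequently $\bar{s}$ and $j^1s$ lie over the same $s=p_{10}\circ\bar{s}$ and differ, in the affine structure of $p_{10}\colon J^1C\to C$ described in item (b) of Section \ref{AffineBundles} and \eqref{T*M_otimes_V(p)}, by a section $t$ of the \emph{symmetric} subbundle $S^2T^\ast M\otimes\mathrm{ad}P\subset p^\ast(\otimes^2T^\ast M\otimes\mathrm{ad}P)$. This already pins down the shape of the fibres.

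The decisive observation for the second H-C equation is that the curvature is insensitive to the symmetric part of the velocity: since $R_{ij}^\alpha$ depends only on the antisymmetric combination $A_{i,j}^\alpha-A_{j,i}^\alpha$ and $M$ is symmetric, one gets $\Omega(\bar{s})=\Omega(j^1s)$. Because all first derivatives of $L$ factor through $\partial\tilde{L}/\partial R$ evaluated at $\Omega$ (with coefficients depending only on $s$), this forces $(\partial L/\partial A_{i,j}^\alpha)\circ\bar{s}=(\partial L/\partial A_{i,j}^\alpha)\circ j^1s$ and likewise for $\partial L/\partial A_i^\alpha$. Moreover the coupling term $M_{h,j}^\beta(\partial^2L/\partial A_i^\alpha\partial A_{h,j}^\beta)\circ\bar{s}$ vanishes, being the contraction of the symmetric $M$ with the antisymmetric constant $\partial R/\partial A$. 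Hence the second H-C equation reduces verbatim to the Euler--Lagrange equation for $s$, which proves both that $\varrho$ is well defined and that $s\in\mathcal{S}_\Lambda$.

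It remains to check the converse inclusion and record the affine structure. For the converse I would run the computation backwards: given $s\in\mathcal{S}_\Lambda$ and an arbitrary symmetric $t\in\Gamma(S^2T^\ast M\otimes\mathrm{ad}P)$, set $\bar{s}=j^1s+t$; then $N=0$ makes the first H-C equation hold identically, while $\Omega(\bar{s})=\Omega(j^1s)$ together with the vanishing of the coupling term reduces the second H-C equation to the E--L equation for $s$, which holds by assumption. Thus $\varrho^{-1}(s)=\{j^1s+t:t\in\Gamma(S^2T^\ast M\otimes\mathrm{ad}P)\}$, and the affine-bundle structure is the one inherited from the affine structure of $p_{10}\colon J^1C\to C$ restricted to the symmetric subbundle. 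The main obstacle is bookkeeping: one must carry out the chain-rule factorization of the second derivatives of $L$ with enough care to isolate exactly the antisymmetrized combination on which \eqref{wc} acts, and to verify that no stray terms survive in the coupling term of the second equation, the conceptual content being precisely that symmetric velocity perturbations leave the curvature untouched, which is what decouples the two equations.
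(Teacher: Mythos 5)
Your proposal is correct and follows essentially the same route as the paper: Utiyama factorization $L=\tilde{L}\circ\Omega$, weak regularity applied to the first H-C equation to force symmetry of $\bar{s}-j^1s$, the resulting identity $\Omega\circ\bar{s}=\Omega\circ j^1s$, and the symmetric--antisymmetric contraction that kills the coupling term and reduces the second H-C equation to the E-L equation for $s$. The only addition is your explicit check of the converse inclusion, which the paper leaves implicit in the reversibility of the same computation (cf.\ Corollary \ref{corol_0}).
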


\begin{proof}
We begin with the first H-C equation
in Proposition \ref{p1}.
As $L=\tilde{L}\circ \Omega $, we obtain
\begin{equation}
\frac{\partial L}
{\partial A_{i,i}^\alpha }=0,
\qquad
\frac{\partial L}
{\partial A_{i,k}^\alpha }
=\frac{\partial \tilde{L}}
{\partial R_{ik}^\alpha }
\circ \Omega ,\label{de}
\end{equation}
where we have set
$R_{ik}^\alpha =-R_{ki}^\alpha $
for $i>k$. Hence
\[
\bigl(
s_{h,j}^\beta -\bar{s}_{h,j}^\beta
\bigr)
\Bigl(
\dfrac{\partial ^2\tilde{L}}
{\partial R_{ik}^\alpha \partial R_{hj}^\beta }
\circ (\Omega \circ \bar{s})
\Bigr)
=0.
\]
If we assume the weak regularity condition
\eqref{wc} holds, then the previous equation
yields
\begin{equation}
\bigl(
s_{h,j}^\beta -\bar{s}_{h,j}^\beta
\bigr)
-\bigl(
s_{j,h}^\beta -\bar{s}_{j,h}^\beta
\bigr)
=0,\qquad \forall \beta ,h,j. \label{sim}
\end{equation}
If we write $\bar{s}=j^1s+t$, for
a $2$-tensor $t$, the condition above
means that $t$ is symmetric;
that is, $t$ is a section of
$S^2T^\ast M\otimes \mathrm{ad}P\to M$.

Next, we study the second equation
in Proposition \ref{p1}. Taking
the equations \eqref{de} into account,
we have
\[
-\dfrac{\partial }{\partial x^j}
\Bigl(
\dfrac{\partial L}
{\partial A_{i,j}^\alpha }
\circ \bar{s}
\Bigr)
+\dfrac{\partial L}{\partial A_i^\alpha }
\circ \bar{s}
+\bigl(
s_{h,j}^\beta -\bar{s}_{h,j}^\beta
\bigr)
\Bigl(
\frac{\partial }{\partial A_i^\alpha }
\Bigl(
\frac{\partial \tilde{L}}
{\partial R_{hj}^\beta }
\circ \Omega
\Bigr)
\circ \bar{s}
\Bigr)
=0.
\]
The last term vanishes identically
as $s_{h,j}^\beta -\bar{s}_{h,j}^\beta $
is symmetric by virtue of \eqref{sim}
and we have
$\partial\tilde{L}/\partial R_{hj}^\beta
=-\partial\tilde{L}/\partial R_{jh}^\beta $.
Then, the second equation reduces to
\begin{equation}
-\dfrac{\partial }{\partial x^j}
\Bigl(
\dfrac{\partial L}{\partial A_{i,j}^\alpha }
\circ\bar{s}
\Bigr)
+\dfrac{\partial L}{\partial A_i^\alpha }
\circ\bar{s}=0,
\label{pseudoE-L}
\end{equation}
which is precisely the E-L equation,
but evaluated at $\bar{s}$ instead of $j^1s$.
Nevertheless, the following formula
is readily checked:
\begin{equation}
\dfrac{\partial L}{\partial A_i^\alpha }=2
\Bigl(
c_{\alpha \gamma }^\beta A_j^\gamma
\frac{\partial \tilde{L}}{\partial R_{ij}^\beta }
\Bigr)
\circ \Omega . \label{A^alpha_i}
\end{equation}
Moreover, from \eqref{sim} we deduce
\begin{align}
\Omega \circ\bar{s}
& =(\bar{s}_{i,j}^\alpha
-\bar{s}_{j,i}^\alpha
-c_{\beta \gamma }^\alpha s_i^\beta s_j^\gamma )
dx^i\wedge dx^j\otimes\tilde{B}_\alpha
\label{curr} \\
& =(s_{i,j}^\alpha
-s_{j,i}^\alpha
-c_{\beta \gamma }^\alpha s_i^\beta s_j^\gamma )
dx^i\wedge dx^j\otimes \tilde{B}_\alpha
\nonumber \\
& =\Omega \circ j^1s. \nonumber
\end{align}
Therefore, from the formulas \eqref{de}
and \eqref{A^alpha_i} we conclude that the equation
\eqref{pseudoE-L} coincides with the E-L equation
for $s$.
\end{proof}

\begin{remark}
As metioned in the introduction,
the formula \eqref{curr} shows that
the curvature remains constant along
the fibre of $\varrho$ over any
$j^1s\in \mathcal{S}_\Lambda $.
\end{remark}

\begin{corollary}
\label{corol_0}
The H-C equations of a weakly-regular
gauge-invariant Lagran\-gian $L$
on the bundle of connections
$p\colon C\to M$ of a principal bundle
$\pi\colon P\to M$, are equivalent
to the following system:
\[
\bigl(
s_{h,j}^\alpha -\bar{s}_{h,j}^\alpha
\bigr)
-\bigl(
s_{j,h}^\alpha -\bar{s}_{j,h}^\alpha
\bigr)
=0,
\qquad
\forall \alpha ,h,j.
\]
\[
-\dfrac{\partial }{\partial x^j}
\Bigl(
\dfrac{\partial L}{\partial A_{i,j}^\alpha }
\circ j^1s
\Bigr)
+\dfrac{\partial L}{\partial A_i^\alpha }
\circ j^1s=0,
\quad
\forall \alpha ,i,
\]
where $\bar{s}\colon M\to  J^1C$ is a section
of $p_1\colon J^1C\to M$, and $s_i^\alpha $,
$s_{i,j}^\alpha $, and $\bar{s}_{i,j}^\alpha $
are as in \emph{Proposition \ref{p1}}.
\end{corollary}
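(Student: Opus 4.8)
The plan is to recognize that this corollary merely repackages, as a two-way equivalence, the chain of reductions already carried out in the proof of Theorem \ref{th1}; the task is therefore to run those reductions and verify that each step is reversible. I would start from the two H-C equations as written in Proposition \ref{p1} and transform them into the stated system, keeping track of which implications are immediate and which require the weak-regularity hypothesis \eqref{wc}.

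First I would treat the first H-C equation. Substituting $L=\tilde{L}\circ\Omega$ and using \eqref{de}—together with the fact that $\Omega$ is affine in the derivative coordinates $A_{i,j}^\alpha$, so that its second derivatives in these variables vanish—the first H-C equation becomes
\[
\bigl(s_{h,j}^\beta-\bar{s}_{h,j}^\beta\bigr)
\Bigl(
\frac{\partial^2\tilde{L}}{\partial R_{ik}^\alpha\partial R_{hj}^\beta}
\circ(\Omega\circ\bar{s})
\Bigr)=0,
\qquad\forall\alpha,i,k.
\]
Writing $T_{hj}^\beta:=s_{h,j}^\beta-\bar{s}_{h,j}^\beta$, the key observation is that the Hessian $\partial^2\tilde{L}/\partial R_{ik}^\alpha\partial R_{hj}^\beta$ is antisymmetric in the pair $(h,j)$ (because $\tilde{L}$ depends only on the antisymmetric coordinates $R$), so that the full-range contraction reduces to a sum over $h<j$ of $T_{hj}^\beta-T_{jh}^\beta$ against the Hessian in the independent coordinates $R_{hj}^\beta$, $h<j$. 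Since that Hessian is nonsingular by \eqref{wc}, the contraction vanishes if and only if $T_{hj}^\beta-T_{jh}^\beta=0$, which is precisely \eqref{sim}, the first equation of the corollary. The converse is immediate: if $T$ is symmetric, its contraction against the antisymmetric Hessian vanishes automatically. Hence the first H-C equation is equivalent to the first equation in the statement.

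Next I would treat the second H-C equation, working under the symmetry condition \eqref{sim} just obtained. Exactly as in Theorem \ref{th1}, after substituting \eqref{de} the sole term still involving $s_{h,j}^\beta-\bar{s}_{h,j}^\beta$ is a contraction of this symmetric tensor (by \eqref{sim}) against a factor antisymmetric in $(h,j)$, since $\partial\tilde{L}/\partial R_{hj}^\beta=-\partial\tilde{L}/\partial R_{jh}^\beta$; hence it vanishes and the equation collapses to \eqref{pseudoE-L}, evaluated at $\bar{s}$. To pass from $\bar{s}$ to $j^1s$ I would invoke the curvature invariance \eqref{curr}, itself a consequence of \eqref{sim}: since $\partial L/\partial A_{i,j}^\alpha$ factors through $\Omega$ by \eqref{de}, and $\partial L/\partial A_i^\alpha$ factors through $\Omega$ and the $C$-coordinates $A_j^\gamma$ by \eqref{A^alpha_i} with $A_j^\gamma\circ\bar{s}=A_j^\gamma\circ j^1s=s_j^\gamma$, the identity $\Omega\circ\bar{s}=\Omega\circ j^1s$ forces $\partial L/\partial A_{i,j}^\alpha\circ\bar{s}=\partial L/\partial A_{i,j}^\alpha\circ j^1s$ and likewise for $\partial L/\partial A_i^\alpha$. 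Substituting these into \eqref{pseudoE-L} yields the second equation of the corollary verbatim, and reading the same substitutions backwards recovers \eqref{pseudoE-L}, hence the second H-C equation.

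The main obstacle I anticipate is bookkeeping rather than conceptual: one must handle with care the index conventions—summation over the full range of $(h,j)$ versus the independent range $h<j$ in \eqref{wc}—and argue correctly that the contraction annihilates only the antisymmetric part of $T_{hj}^\beta$, so that weak regularity delivers exactly \eqref{sim} and nothing stronger. Everything else is a direct transcription of the computations already present in the proof of Theorem \ref{th1}.
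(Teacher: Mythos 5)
Your proof is correct and follows essentially the same route as the paper, which establishes this corollary inside the proof of Theorem \ref{th1}: weak regularity turns the first H-C equation into the symmetry condition \eqref{sim}, and then \eqref{sim} together with $\Omega\circ\bar{s}=\Omega\circ j^1s$ collapses the second H-C equation to the E-L equation evaluated at $j^1s$. Your explicit check that each reduction is reversible (so that the systems are genuinely equivalent, not merely implied one way) is a point the paper leaves implicit, but it is not a different method.
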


\begin{remark}
\label{remark}
For every section $\bar{s}\colon M\to  J^1C$
of $p_1\colon J^1C\to M$, let $s\colon M\to C$
be the section of $p\colon C\to M$ defined by
$s=p_{10}\circ\bar{s}$. As the points
$\bar {s}(x),j_x^1s\in J^1C$ lie over the same
fibre of $p_{10}\colon J^1C\to C$ and this map
admits an affine-bundle structure modelled over
the vector bundle \eqref{T*M_otimes_V(p)}, a map
\[
\delta _{C}\colon J^1(p_1)\to \otimes ^2T^\ast M
\otimes \mathrm{ad}P
\]
exists such that,
$\delta _C(j_x^1\bar{s})=\bar{s}(x)-j_x^1s$.
If $\mathrm{alt}\colon \otimes ^2T^\ast M
\otimes \mathrm{ad}P\to \bigwedge \nolimits^2T^\ast M
\otimes \mathrm{ad}P$ denotes the anti-symmetrization
operator, then the first group of H-C equations
in Corollary \ref{corol_0} means that $j^1\bar{s}$
takes values into the subbundle
$\ker (\mathrm{alt}\circ \delta _C)$.
\end{remark}

\begin{proposition}
\label{L3}
If $\Lambda $ is a gauge-invariant Lagrangian
density on the bundle of connections
of a principal bundle
$\pi \colon P\to M$ and
$\bar{s}\in \mathcal{\bar{S}}_\Lambda $
(resp.\ $s\in \mathcal{S}_\Lambda $), then
$\Phi _C^{(1)}\circ \bar{s}
\in \mathcal{\bar{S}}_\Lambda $
(resp.\ $\Phi _C\circ s\in \mathcal{S}_\Lambda $)
for every $\Phi \in \mathrm{Gau}P$. Accordingly,
the gauge group of $P$ acts (on the left)
on $\mathcal{\bar{S}}_\Lambda $ (resp.\
$\mathcal{S}_\Lambda $) by setting
$\Phi \cdot\bar{s}=\Phi _C^{(1)}\circ\bar{s}$
(resp.\ $\Phi \cdot s=\Phi _{C}\circ s$),
$\forall \bar{s}\in \mathcal{\bar{S}}_\Lambda $
(resp.\ $\forall s\in \mathcal{S}_\Lambda $),
$\forall \Phi \in \mathrm{Gau}P$.
\end{proposition}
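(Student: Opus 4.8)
The plan is to deduce everything from a single fact: the Poincar\'e-Cartan form $\Theta_\Lambda$ is invariant under $\Phi_C^{(1)}$ for every gauge transformation $\Phi$. Once $(\Phi_C^{(1)})^\ast\Theta_\Lambda=\Theta_\Lambda$ is known, the whole statement follows by functorial bookkeeping. First I would treat the H-C case. Writing $\bar{s}^\prime=\Phi_C^{(1)}\circ\bar{s}$ and noting that $\Phi_C^{(1)}$ covers $\mathrm{id}_M$ (because $\Phi\in\mathrm{Gau}P$ induces the identity on $M$), we have $p_1\circ\bar{s}^\prime=p_1\circ\bar{s}=\mathrm{id}_M$, so $\bar{s}^\prime$ is again a section of $p_1$. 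For any $X\in\mathfrak{X}^v(J^1C)$ I would compute
\[
(\bar{s}^\prime)^\ast\bigl(i_Xd\Theta_\Lambda\bigr)
=\bar{s}^\ast\Bigl((\Phi_C^{(1)})^\ast\bigl(i_Xd\Theta_\Lambda\bigr)\Bigr)
=\bar{s}^\ast\bigl(i_{X^\prime}(\Phi_C^{(1)})^\ast d\Theta_\Lambda\bigr),
\]
where $X^\prime=(\Phi_C^{(1)})^{-1}_\ast X$, using the standard identity $\Psi^\ast(i_X\alpha)=i_{\Psi^{-1}_\ast X}(\Psi^\ast\alpha)$. Since $(\Phi_C^{(1)})^\ast\Theta_\Lambda=\Theta_\Lambda$ implies $(\Phi_C^{(1)})^\ast d\Theta_\Lambda=d\Theta_\Lambda$, and since $\Phi_C^{(1)}$, covering $\mathrm{id}_M$, carries $\mathfrak{X}^v(J^1C)$ bijectively onto itself (its inverse $(\Phi^{-1})_C^{(1)}$ likewise covers $\mathrm{id}_M$), the field $X^\prime$ again runs through all $p_1$-vertical fields. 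As $\bar{s}$ solves \eqref{HC}, each term $\bar{s}^\ast(i_{X^\prime}d\Theta_\Lambda)$ vanishes, whence $\bar{s}^\prime\in\mathcal{\bar{S}}_\Lambda$.

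For the invariance of $\Theta_\Lambda$, which I expect to be the main obstacle, I would argue as follows. Because $\Phi_C^{(1)}$ is a $1$-jet prolongation it preserves the contact system $\mathcal{C}$, so each $(\Phi_C^{(1)})^\ast\theta^\alpha$ lies in $\mathcal{C}$; because $\Phi_C$ covers $\mathrm{id}_M$ the horizontal volume form and its contractions $\mathbf{v},\mathbf{v}_i$ are fixed; and gauge invariance gives $L\circ\Phi_C^{(1)}=L$, that is $(\Phi_C^{(1)})^\ast(L\,\mathbf{v})=L\,\mathbf{v}$. The Poincar\'e-Cartan form is intrinsically attached to the density $L\mathbf{v}$ and is natural under fibred automorphisms covering base diffeomorphisms (cf.\ \cite{GS}, \cite{MP}); applied to $\Phi_C$, whose base map is the identity and which preserves the density, this naturality yields $(\Phi_C^{(1)})^\ast\Theta_\Lambda=\Theta_\Lambda$. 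The delicate point is precisely that the coefficients $\partial L/\partial A_{i,j}^\alpha$ are not gauge-invariant individually; only their pairing with the contact forms inside $\Theta_\Lambda$ is, so the argument must proceed through the intrinsic characterization rather than the coordinate expression term by term.

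For the E-L case I would invoke the equivalence, noted after Lemma \ref{L1}, that along a holonomic section $\bar{s}=j^1s$ one has $j^1s\in\mathcal{\bar{S}}_\Lambda$ if and only if $s\in\mathcal{S}_\Lambda$, together with the naturality of prolongation $\Phi_C^{(1)}\circ j^1s=j^1(\Phi_C\circ s)$ (immediate from $\Phi_C^{(1)}(j_x^1s)=j_x^1(\Phi_C\circ s)$, since $\Phi_C$ covers $\mathrm{id}_M$). Thus if $s\in\mathcal{S}_\Lambda$ then $j^1s\in\mathcal{\bar{S}}_\Lambda$, hence $\Phi_C^{(1)}\circ j^1s=j^1(\Phi_C\circ s)\in\mathcal{\bar{S}}_\Lambda$ by the H-C case, and being holonomic this forces $\Phi_C\circ s\in\mathcal{S}_\Lambda$. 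Finally, that these assignments define left actions follows from the group homomorphism $\mathrm{Aut}P\to\mathrm{Diff}C$, $\Phi\mapsto\Phi_C$, combined with functoriality of prolongation, which give $(\Psi\circ\Phi)_C^{(1)}=\Psi_C^{(1)}\circ\Phi_C^{(1)}$ and $(\mathrm{id})_C^{(1)}=\mathrm{id}$; hence $(\Psi\circ\Phi)\cdot\bar{s}=\Psi\cdot(\Phi\cdot\bar{s})$ and $\mathrm{id}\cdot\bar{s}=\bar{s}$, and likewise on $\mathcal{S}_\Lambda$.
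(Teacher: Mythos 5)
Your proposal is correct and follows essentially the same route as the paper: both hinge on the functorial character of the Poincar\'e-Cartan form together with $(\Phi_C^{(1)})^\ast\Lambda=\Lambda$ to get $(\Phi_C^{(1)})^\ast\Theta_\Lambda=\Theta_{(\Phi_C^{(1)})^\ast\Lambda}=\Theta_\Lambda$, and both transport the vertical test field by $(\Phi_C^{(1)})^{-1}_\ast$ before invoking the H-C equation for $\bar{s}$. The only divergence is that you deduce the E-L case from the H-C case via holonomic sections and $\Phi_C^{(1)}\circ j^1s=j^1(\Phi_C\circ s)$, where the paper merely declares that argument ``similar'' and omits it; your version is a valid way to fill that in.
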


\begin{proof}
We prove that the section
$\Phi _C^{(1)}\circ \bar{s}$
of $p_1$ is a solution to H-C equation
\eqref{HC}. For every
$Z\in \mathfrak{X}^v(J^1E)$, we set
\[
Y=\left(
\Phi _C^{(1)}
\right) ^{-1}
\!\! \cdot Z\in \mathfrak{X}^v(J^1E).
\]
As $\Lambda $ is gauge invariant, we have
$(\Phi _{C}^{(1)})^\ast \Lambda =\Lambda $,
from the functorial character
of the Poincar\'e-Cartan form
(see \cite{Garcia0}) we obtain
\begin{align*}
\left(
\Phi _C^{(1)}\circ \bar{s}
\right) ^\ast
\left(
i_Zd\Theta _\Lambda
\right)
& =\bar{s}^\ast (\Phi _C^{(1)})^\ast
\left(
i_Zd\Theta _\Lambda
\right) \\
& =\bar{s}^\ast i_{Y}d
\left(
(\Phi _C^{(1)})^\ast \Theta _\Lambda
\right) \\
& =\bar{s}^\ast i_Yd
\Theta _{(\Phi _C^{(1)})^\ast \Lambda } \\
& =\bar{s}^\ast i_Yd
\Theta _\Lambda \\
& =0.
\end{align*}
The proof for the solutions to E-L equations,
is similar and therefore it is omitted.
\end{proof}

Theorem \ref{th1} shows that the set of solutions
of H-C trivially fibers over the set of solutions
of E-L. On the other hand, for gauge-invariant
problems, the moduli space of solutions under
the action of the (restricted) gauge group plays
a relevant role. We now study the relationship
between the moduli of H-C and E-L showing
that the first fibers over the second,
but not necessarily in a trivial way. First,
note that the gauge group $\mathrm{Gau}P$ acts
on the adjoint bundle $\mathrm{ad}P$ by setting
$\Phi _{\mathrm{ad}}((u,B)_{\mathrm{ad}})
=(\Phi (u),B)_{\mathrm{ad}}$,
$\forall \Phi \in \mathrm{Gau}P$,
$\forall (u,B)\in P\times\mathfrak{g}$,
and this action obviously induces another action
on $S^2T^\ast M\otimes \mathrm{ad}P$ as follows:
\begin{equation}
\begin{array}
[c]{l}
\Phi _{\mathrm{ad}P}
\left(
w_1\odot w_2\otimes v
\right)
=w_1\odot w_2\otimes \Phi _{\mathrm{ad}P}(v), \\
\forall w_1,w_2\in T_x^\ast M,
\;
\forall v\in
\left(
\mathrm{ad}P
\right) _x,
\end{array}
\label{Phi_adjoint}
\end{equation}
where the symbol $\odot $ denotes symmetric
product.

\begin{proposition}
\label{moduli}
Given a point $x_0\in M$, let
$\mathrm{Gau}_{x_0}P$ be the subgroup
of gauge transformations
$\Phi \in \mathrm{Gau}P$ such that,
$\Phi (u)=u$,
$\forall u\in \pi^{-1}(x_0)$. Then

\begin{enumerate}
\item[\emph{(i)}]
For every gauge-invariant Lagrangian density
$\Lambda $ on the bundle of connections
of $\pi \colon P\to M$, the quotient map
$\kappa _{P}\colon\mathcal{S}_\Lambda
\to \mathcal{S}_\Lambda /\mathrm{Gau}_{x_0}P$
is a set-theoretical principal
$\mathrm{Gau}_{x_0}P$-bundle.

\item[\emph{(ii)}]
In addition, if $\Lambda $ is weakly regular,
the projection
$\varrho \colon \mathcal{\bar{S}}_\Lambda
\to \mathcal{S}_\Lambda $
defined in \emph{Theorem \ref{th1}},
induces a mapping
\[
\begin{array}
[c]{l}
\varrho _{\mathrm{Gau}_{x_0}P}
\colon \mathcal{\bar{S}}_\Lambda
/\mathrm{Gau}_{x_0}P
\to \mathcal{S}_\Lambda
/\mathrm{Gau}_{x_0}P, \\
\varrho _{\mathrm{Gau}_{x_0}P}
\left(
\bar{s}\operatorname{mod}
\mathrm{Gau}_{x_0}P
\right)
=\varrho (\bar{s})\operatorname{mod}
\mathrm{Gau}_{x_0}P,
\end{array}
\]
which is the vector bundle associated
to the principal bundle $\kappa _P$
by the action on
$S^2\Omega ^1(M)\otimes \mathrm{gau}P$
induced on the sections
of $S^2T^\ast M\otimes \mathrm{ad}P$
by the action of $\mathrm{Gau}P$
defined in the formula
\emph{\eqref{Phi_adjoint}} above.
\end{enumerate}
\end{proposition}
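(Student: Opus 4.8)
The plan is to establish part (i) first by verifying the three set-theoretical axioms of a principal bundle, and then to deduce part (ii) by an associated-bundle construction. For part (i), the key observation is that $\mathrm{Gau}_{x_0}P$ acts freely on $\mathcal{S}_\Lambda$: if $\Phi\cdot s = s$ for a critical section $s$, then $\Phi$ fixes the connection $s$ at every point, and since $\Phi$ is additionally required to fix the whole fibre $\pi^{-1}(x_0)$ pointwise, parallel transport along paths emanating from $x_0$ (using the connection $s$) forces $\Phi$ to be the identity on all of $P$ (assuming $M$ connected, as stipulated in Section \ref{sectionHC}). Freeness is the crucial point that makes $\kappa_P$ a genuine principal bundle rather than merely a quotient; the local-triviality requirement in the set-theoretical category is automatic once freeness holds, since one takes $\kappa_P^{-1}(U)\cong U\times\mathrm{Gau}_{x_0}P$ via any set-theoretic section of $\kappa_P$ over $U$.

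For part (ii), the strategy is to exhibit $\varrho_{\mathrm{Gau}_{x_0}P}$ as the associated vector bundle $\mathcal{S}_\Lambda \times_{\mathrm{Gau}_{x_0}P} V$, where the typical fibre is $V = \Gamma(S^2T^\ast M\otimes\mathrm{ad}P)$. The main input is Theorem \ref{th1}, which identifies the fibre $\varrho^{-1}(s)$ with the affine space $\{j^1s + t : t\in\Gamma(S^2T^\ast M\otimes\mathrm{ad}P)\}$, so that each fibre is an affine space modelled on $V$; choosing the origin $j^1s$ trivialises it as the vector space $V$ itself. What must be checked is \emph{equivariance}: for $\Phi\in\mathrm{Gau}P$ and $\bar s\in\mathcal{\bar S}_\Lambda$ with $s=\varrho(\bar s)$, writing $\bar s = j^1s + t$, one needs
\[
\Phi_C^{(1)}\circ\bar s = j^1(\Phi_C\circ s) + \Phi_{\mathrm{ad}P}(t),
\]
where $\Phi_{\mathrm{ad}P}$ acts on the $S^2T^\ast M\otimes\mathrm{ad}P$-component as in \eqref{Phi_adjoint}. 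This says precisely that the affine difference $\delta_C$ of Remark \ref{remark} intertwines the gauge action on $\mathcal{\bar S}_\Lambda$ with the induced linear action on $V$. Granting this, the map $\bar s\mapsto(s,t)$ descends to the quotients and realises $\varrho_{\mathrm{Gau}_{x_0}P}$ as the associated bundle with fibre $V$ and structure group $\mathrm{Gau}_{x_0}P$ acting through $\Phi_{\mathrm{ad}P}$.

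The hard part will be verifying the equivariance of the affine difference $\delta_C$ under the gauge action, together with the compatibility between the action on holonomic jets and the action on the affine model. Concretely, since $\Phi_C\circ s_\Gamma = s_{\Phi(\Gamma)}$ and $\Phi_C^{(1)}$ is a prolongation, one must trace how $\Phi_C^{(1)}$ moves a non-holonomic section: it sends $j^1s$ to $j^1(\Phi_C\circ s)$ because prolongation commutes with $\Phi_C^{(1)}$ on holonomic sections, while the affine displacement $t$—being $p_{10}$-vertical and hence valued in $V(p_{10})\cong p^\ast(\otimes^2 T^\ast M\otimes\mathrm{ad}P)$ by \eqref{T*M_otimes_V(p)}—transforms by the linear isotropy of $\Phi_C^{(1)}$, which is exactly $\Phi_{\mathrm{ad}P}$ on the $\mathrm{ad}P$-factor and trivial on the $\otimes^2 T^\ast M$-factor since $\Phi$ covers the identity on $M$. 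The symmetry of $t$ is preserved because $\Phi_{\mathrm{ad}P}$ acts only on the bundle coefficient, so $t$ stays in the symmetric part $S^2T^\ast M\otimes\mathrm{ad}P$. Assembling these observations shows that $\delta_C$ is $\mathrm{Gau}P$-equivariant, the quotient map $\varrho_{\mathrm{Gau}_{x_0}P}$ is well defined, and its fibres acquire the structure of the associated vector bundle claimed in the statement.
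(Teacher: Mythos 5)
Your proposal is correct and follows essentially the same route as the paper: part (i) rests on the freeness of the $\mathrm{Gau}_{x_0}P$-action (which the paper simply cites from the literature, while you sketch the parallel-transport argument), and part (ii) hinges on the identification $\bar{s}\leftrightarrow (s,t)$ from Theorem \ref{th1} together with the equivariance formula $\Phi_C^{(1)}\circ\bar{s}=J^1(\Phi_C\circ s)+\Phi_{\mathrm{ad}}\cdot t$, justified exactly as in the paper by the fact that $\Phi_C$ is an affine-bundle morphism over $\mathrm{id}_M$ with linear part $\mathrm{id}_{T^\ast M}\otimes\Phi_{\mathrm{ad}}$.
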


\begin{proof}
As is known, $\mathrm{Gau}_{x_0}P$
acts freely on the space of connections,
i.e., on the sections of $p\colon C\to M$
and, in particular, on
$\mathcal{\bar{S}}_\Lambda $
and on $\mathcal{S}_\Lambda $ (e.g., see
\cite[Theorem 2.2.4]{CR}, \cite[III.C]{MV}).
Nevertheless,
the quotients
$\mathcal{\bar{S}}_\Lambda /\mathrm{Gau}_{x_0}P$
and $\mathcal{S}_\Lambda /\mathrm{Gau}_{x_0}P$
may be singular, e.g., see \cite[p.\ 134]{CR}.
Because of this, we consider such structure
from the set-theoretical point of view only.

Furthermore, the mapping
$\varrho _{\mathrm{Gau}_{x_0}P}$
is well defined as
\begin{align*}
\varrho
\left(
\Phi _{C}^{(1)}\circ\bar{s}
\right)
& =\left(
p_{10}\circ
\Phi _C^{(1)}
\right)
\circ \bar{s} \\
& =\left(
\Phi _C\circ p_{10}
\right)
\circ\bar{s} \\
& =\Phi _C\circ \varrho (\bar{s}),
\end{align*}
and from Theorem \ref{th1} it follows
that every $\bar{s}
\in \mathcal{\bar{S}}_\Lambda $
can be uniquely written as
$\bar{s}=j^1s+t$, where
$s=\varrho (\bar{s})$ and
$t\in S^2\Omega ^1(M)
\otimes \mathrm{gau}P$.
Hence $\bar{s}$ can be identified
to the pair $(s,t)$,
i.e.,
$\mathcal{\bar{S}}_\Lambda
\cong \mathcal{S}_\Lambda
\times S^2\Omega ^1(M)
\otimes \mathrm{gau}P$.
Recalling that $\Phi _C\colon C
\to C$ is an affine-bundle
morphism whose associated
vector-bundle is
$\mathrm{id}_{T^\ast M}
\otimes \Phi _{\mathrm{ad}}
\colon T^\ast M\otimes \mathrm{ad}P
\to T^\ast M\otimes \mathrm{ad}P$,
we have
\begin{align*}
\Phi _C^{(1)}\circ \bar{s}
& =\Phi _C^{(1)}\circ
\left(
j^1s+t
\right) \\
& =J^1
\left(
\Phi _C\circ s
\right)
+\Phi _{\mathrm{ad}}\cdot t,
\end{align*}
thus concluding the proof.
\end{proof}

\begin{theorem}
\label{T2}
With the same notations as
in \emph{Section \ref{sectionJacobi}}
and the same assumptions as
in \emph{Theorem \ref{th1}}, if
$\bar{X}\in T_{\bar{s}}
\mathcal{\bar{S}}_\Lambda $,
then
$(p_{10})_\ast \circ \bar{X}
\in T_s\mathcal{S}_\Lambda $,
where $s=p_{10}\circ\bar{s}$. Hence,
the natural map $\varrho \colon
\mathcal{\bar{S}}_\Lambda
\to \mathcal{S}_\Lambda $ induces
a linear map
$\varrho _\ast \colon T_{\bar{s}}
\mathcal{\bar{S}}_\Lambda
\to T_s\mathcal{S}_\Lambda $,
$\varrho _\ast(\bar{X})
=(p_{10})_\ast \circ \bar{X}$.
Moreover, $\ker \varrho _\ast
\cong \Gamma
(S^2T^\ast M\otimes \mathrm{ad}P)$.
\end{theorem}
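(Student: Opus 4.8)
The plan is to linearise, term by term, the affine-bundle structure obtained in Theorem~\ref{th1}. Recall that every $\bar{s}\in\mathcal{\bar{S}}_\Lambda$ is written uniquely as $\bar{s}=j^1s+t$ with $s=p_{10}\circ\bar{s}\in\mathcal{S}_\Lambda$ and $t\in\Gamma(S^2T^\ast M\otimes\mathrm{ad}P)$, and that, by Corollary~\ref{corol_0}, the H-C equations split into the symmetry condition on the tensor $t$ and the E-L equation for $s$ by itself. Mirroring this, I would first use the affine structure of $p_{10}\colon J^1C\to C$ recorded in \eqref{T*M_otimes_V(p)} to decompose any $\bar{X}\in\Gamma(M,\bar{s}^\ast V(p_1))$ as $\bar{X}=X^{(1)}+\tau$, where $X=(p_{10})_\ast\circ\bar{X}\in\Gamma(M,s^\ast V(p))$ is the projected field, $X^{(1)}$ denotes its $1$-jet prolongation, and $\tau$ is the $p_{10}$-vertical remainder; in the coordinates of Proposition~\ref{p1}, writing $\bar{X}=v_i^\alpha(\partial/\partial A_i^\alpha)|_{\bar{s}}+v_{i,j}^\alpha(\partial/\partial A_{i,j}^\alpha)|_{\bar{s}}$, this is just the splitting of the components $(v_i^\alpha,v_{i,j}^\alpha)$ into $v_i^\alpha$ and $\tau_{i,j}^\alpha=v_{i,j}^\alpha-\partial v_i^\alpha/\partial x^j$, so that $\tau\in\Gamma(\otimes^2T^\ast M\otimes\mathrm{ad}P)$. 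Since $(p_{10})_\ast$ annihilates the $\partial/\partial A_{i,j}^\alpha$-components and keeps the $v_i^\alpha$, one has at once $\varrho_\ast(\bar{X})=(p_{10})_\ast\circ\bar{X}=X$.

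The substantive step is to prove that, under the hypotheses of Theorem~\ref{th1}, membership $\bar{X}\in T_{\bar{s}}\mathcal{\bar{S}}_\Lambda$ decouples along this splitting into $X\in T_s\mathcal{S}_\Lambda$ together with $\tau\in\Gamma(S^2T^\ast M\otimes\mathrm{ad}P)$. To this end I would take the linearised H-C equations \eqref{Jacobi_I}--\eqref{Jacobi_II} established in Proposition~\ref{p2}, specialise them to $C\to M$, and insert the gauge-invariant data: the derivatives \eqref{de} of $L=\tilde{L}\circ\Omega$, the weak-regularity inequation \eqref{wc}, and the curvature equality \eqref{curr}, the latter being available precisely because $t$ is symmetric. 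Exactly as the first H-C equation produced \eqref{sim} in the proof of Theorem~\ref{th1}, the non-degeneracy \eqref{wc} collapses the analogue of \eqref{Jacobi_II} to the vanishing of the antisymmetrisation of $\tau$, that is, to the symmetry of $\tau$. In the analogue of \eqref{Jacobi_I} the terms carrying $(s_{h,j}^\beta-\bar{s}_{h,j}^\beta)$ are contracted with factors antisymmetric in the corresponding curvature indices, hence vanish by $\partial\tilde{L}/\partial R_{hj}^\beta=-\partial\tilde{L}/\partial R_{jh}^\beta$, which is the same cancellation that turned the second H-C equation into \eqref{pseudoE-L}; after this, and using $\Omega\circ\bar{s}=\Omega\circ j^1s$, the equation reduces to the Jacobi equation for $X$ recorded in Remark~\ref{remm}. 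This shows $\varrho_\ast(\bar{X})=X\in T_s\mathcal{S}_\Lambda$, that $\varrho_\ast$ is linear, and that it is independent of the vertical extension chosen.

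The kernel is then read off immediately. If $\varrho_\ast(\bar{X})=0$ then $X=0$, so $X^{(1)}=0$ and $\bar{X}=\tau$; the decoupled system retains only the symmetry condition, whence $\bar{X}$ ranges exactly over $\Gamma(S^2T^\ast M\otimes\mathrm{ad}P)$. Conversely, every symmetric $\tau$ satisfies the linearised equations with $X=0$, so $\ker\varrho_\ast\cong\Gamma(S^2T^\ast M\otimes\mathrm{ad}P)$, as claimed.

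I expect the main obstacle to be the bookkeeping in specialising \eqref{Jacobi_I}--\eqref{Jacobi_II} to the bundle of connections and in checking that the third-order terms in $\tilde{L}$, together with the contributions of the form \eqref{A^alpha_i} coming from the $A_i^\alpha$-derivatives, cancel by the antisymmetry of $\partial\tilde{L}/\partial R_{hj}^\beta$ in $h,j$. This is the linearised counterpart of the mechanism by which Theorem~\ref{th1} reduced the second H-C equation to the genuine E-L equation, and once it is in place the decoupling---and with it both the well-definedness of $\varrho_\ast$ and the identification of its kernel---follows without further difficulty.
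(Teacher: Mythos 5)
Your proposal is correct and follows essentially the same route as the paper: the second linearised H-C equation plus weak regularity forces $\bar{X}=X^{(1)}+t$ with $t$ symmetric, and the first linearised equation reduces, via the antisymmetry of $\partial\tilde{L}/\partial R_{hj}^\beta$ and $\Omega\circ\bar{s}=\Omega\circ j^1s$, to the Jacobi equation of Remark \ref{remm} for $X$. The only cosmetic difference is that you posit the splitting $\bar{X}=X^{(1)}+\tau$ up front and then prove $\tau$ is symmetric, whereas the paper derives the splitting from equation \eqref{GaugeJacobi_II}; the identification of the kernel is the same in both.
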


\begin{proof}
We first begin with the second Jacobi
equation \eqref{Jacobi_II}
for a gauge-invariant Lagrangian
$L\colon J^1C\to \mathbb{R}$
and a Jacobi vector field
\[
\bar{X}=v_i^\alpha
\frac{\partial }
{\partial A_i^\alpha }
+v_{i,j}^\alpha
\frac{\partial }
{\partial A_{i,j}^\alpha },
\quad
v_i^\alpha ,v_{i,j}^\alpha
\in C^\infty (M),
\]
along a solution $\bar{s}
\in \mathcal{\bar{S}}_\Lambda $.
We have
\begin{equation}
\begin{array}
[c]{cl}
0= & \Bigl(
\dfrac{\partial ^2L}
{\partial A_{r,i}^\alpha
\partial A_{h,j}^\sigma }
\circ \bar{s}
\Bigr)
\Bigl(
\dfrac{\partial v_r^\alpha }
{\partial x^i}
-v_{r,i}^\alpha
\Bigr)
\medskip \\
& +\Bigl(
\dfrac{\partial s_r^\alpha }
{\partial x^i}-\bar{s}_{r,i}^\alpha
\Bigr)
\Bigl(
\dfrac{\partial ^3L}
{\partial A_l^\beta
\partial A_{r,i}^\alpha
\partial A_{h,j}^\sigma }
\circ\bar{s}
\Bigr)
v_l^\beta
\medskip \\
& +\Bigl(
\dfrac{\partial s_r^\alpha }
{\partial x^i}
-\bar{s}_{r,i}^\alpha
\Bigr)
\Bigl(
\dfrac{\partial ^3L}
{\partial A_{r,i}^\alpha
\partial A_{l,k}^\beta
\partial A_{h,j}^\sigma }
\circ\bar{s}
\Bigr)
v_{l,k}^\beta ,
\end{array}
\label{GaugeJacobi_II}
\end{equation}
for any $\sigma ,h,j$. From Theorem
\ref{th1}, $j^1s-\bar{s}$ is a symmetric
tensor. Moreover, taking the formula
\eqref{de} into account, the last two
summands of \eqref{GaugeJacobi_II}
vanish. We thus obtain
\[
0=\Bigl(
\dfrac{\partial ^2L}
{\partial A_{r,i}^\alpha
\partial A_{s,j}^\sigma }
\circ\bar{s}
\Bigr)
\Bigl(
\dfrac{\partial v_r^\alpha }
{\partial x^i}
-v_{r,i}^\alpha
\Bigr),
\]
which, assuming the weak regularity of $L$,
implies
\begin{equation}
\label{tensor}
v_{r,i}^\alpha
=\frac{\partial v_r^\alpha }
{\partial x^i}
+t_{r,i}^\alpha ,
\end{equation}
where $t_{r,i}^\alpha $ are the components
of a symmetric tensor $t\in \Gamma
(S^2T^\ast M \otimes \mathrm{ad}P)$.
From the formula \eqref{i.c.t.} and Lemma \ref{l1}
we conclude $\bar{X}=X^{(1)}+t$,
where $X^{(1)}$ is the $1$-jet prolongation
of the vector field along $s$ given by
$X=v_i^\alpha \partial /\partial A_i^\alpha $.

Next, we consider the first Jacobi equation
\ref{Jacobi_I}, from which we obtain

{\small
\begin{equation}
\begin{array}
[c]{cl}
0= &
\left(
\dfrac{\partial ^2L}
{\partial A_r^\alpha
\partial A_q^\sigma }
\circ\bar{s}
-\dfrac{\partial ^3L}
{\partial x^i\partial A_r^\alpha
\partial A_{q,i}^\sigma }
\circ\bar{s}
+\Bigl(
s_{t,i}^\beta -\bar{s}_{t,i}^\beta
\Bigr)
\Bigl(
\dfrac{\partial ^3L}
{\partial A_r^\alpha
\partial A_q^\sigma
\partial A_{t,i}^\beta }
\circ\bar{s}
\Bigr)
\right. \\
& \left.
-s_{t,j}^\gamma
\Bigl(
\dfrac{\partial ^3L}
{\partial A_r^\alpha
\partial A_t^\gamma
\partial A_{q,j}^\sigma }
\circ\bar{s}
\Bigr)
-\dfrac{\partial \bar{s}_{t,h}^\gamma }
{\partial x^j}
\Bigl(
\dfrac{\partial ^3L}
{\partial A_r^\alpha
\partial A_{t,h}^\gamma
\partial A_{q,j}^\sigma }
\circ\bar{s}
\Bigr)
\right)
v_r^\alpha \\
&
+\Bigl(
\dfrac{\partial ^2L}
{\partial A_q^\sigma
\partial A_{r,i}^\alpha }
\circ \bar{s}
-\dfrac{\partial ^2L}
{\partial A_r^\alpha
\partial A_{q,i}^\sigma }
\circ \bar{s}
\Bigr)
\dfrac{\partial v_r^\alpha }
{\partial x^i}
-\Bigl(
\dfrac{\partial ^2L}
{\partial A_{r,j}^\alpha
\partial A_{q,i}^\sigma }
\circ\bar{s}
\Bigr)
\dfrac{\partial v_{r,j}^\alpha }
{\partial x^i} \\
& +\left(
\Bigl(
s_{t,j}^\beta -\bar{s}_{t,j}^\beta
\Bigr)
\Bigl(
\dfrac{\partial ^3L}
{\partial A_q^\sigma
\partial A_{r,i}^\alpha
\partial A_{t,j}^\beta }
\circ\bar{s}
\Bigr)
-\dfrac{\partial ^3L}
{\partial x^j\partial A_{r,i}^\alpha
\partial A_{q,j}^\sigma }
\circ\bar{s}
\right. \\
& \left.
-s_{t,j}^\beta
\Bigl(
\dfrac{\partial ^3L}
{\partial A_t^\beta
\partial A_{r,i}^\alpha
\partial A_{q,j}^\sigma }
\circ \bar{s}
\Bigr)
-\dfrac{\partial \bar{s}_{t,h}^\beta }
{\partial x^j}
\Bigl(
\dfrac{\partial ^3L}
{\partial A_{t,h}^\beta
\partial A_{r,i}^\alpha
\partial A_j^\sigma }
\circ\bar{s}
\Bigr)
\right)
v_{r,i}^\alpha .
\end{array}
\label{Jacobi_I_con}
\end{equation}
}
Again taking the symmetry of the differences
$s^\beta _{t,i}-\bar{s}^\beta _{t,i}$
(and their derivatives) into account,
the previous equation reduces to the following:

\begin{equation}
\begin{array}
[c]{cl} 0
= & \left(
\dfrac{\partial ^2L}
{\partial A_r^\alpha
\partial A_q^\sigma }
\circ \bar{s}
-\dfrac{\partial ^3L}
{\partial x^i\partial A_r^\alpha
\partial A_{q,i}^\sigma }
\circ \bar{s}-s_{t,j}^\gamma
\Bigl(
\dfrac{\partial ^3L}
{\partial A_r^\alpha
\partial A_t^\gamma
\partial A_{q,j}^\sigma }
\circ\bar{s}
\Bigr)
\right. \\
& \left.
-\dfrac{\partial ^2s_t^\gamma }
{\partial x^h\partial x^j}
\Bigl(
\dfrac{\partial ^3L}
{\partial A_r^\alpha
\partial A_{t,h}^\gamma
\partial A_{q,j}^\sigma }
\circ \bar{s}
\Bigr)
\right)
v_r^\alpha \\
& +\Bigl(
\dfrac{\partial ^2L}
{\partial A_q^\sigma
\partial A_{r,i}^\alpha }
\circ \bar{s}
-\dfrac{\partial ^2L}
{\partial A_r^\alpha
\partial A_{q,i}^\sigma }
\circ \bar{s}
\Bigr)
\dfrac{\partial v_r^\alpha }
{\partial x^i}
-\Bigl(
\dfrac{\partial ^2L}
{\partial A_{r,j}^\alpha
\partial A_{q,i}^\sigma }
\circ\bar{s}
\Bigr)
\dfrac{\partial v_{r,j}^\alpha }
{\partial x^i} \\
& -\left(
\dfrac{\partial ^3L}
{\partial x^j\partial A_{r,i}^\alpha
\partial A_{q,j}^\sigma }
\circ \bar{s}+s_{t,j}^\beta
\Bigl(
\dfrac{\partial ^3L}
{\partial A_t^\beta
\partial A_{r,i}^\alpha
\partial A_{q,j}^\sigma }
\circ \bar{s}
\Bigr)
\right. \\
& \left.
+\dfrac{\partial ^2 s_t ^\beta}
{\partial x^h\partial x^j}
\Bigl(
\dfrac{\partial ^3L}
{\partial A_{t,h}^\beta
\partial A_{r,i}^\alpha
\partial A_{q,j}^\sigma }
\circ\bar{s}
\Bigr)
\right)
v_{r,i}^\alpha .
\end{array}
\label{Jacobi_I_con_bis}
\end{equation}

By taking derivatives in \eqref{tensor},
we obtain
\[
\frac{\partial v_{r,j}^\alpha }
{\partial x^i}
=\frac{\partial v_r^\alpha }
{\partial x^j\partial x^i}
+\frac{\partial t_{r,j}^\alpha }
{\partial x^i},
\]
and substituting these expressions
into the equation \eqref{Jacobi_I_con_bis},
again by virtue of the skew-symmetry
of $\partial L/\partial A^\alpha _{i,j}$,
we have
\[
\begin{array}
[c]{cl}
0= & \left(
\dfrac{\partial ^2L}
{\partial A_r^\alpha
\partial A_q^\sigma }
\circ \bar{s}
-\dfrac{\partial ^3L}
{\partial x^i\partial A_r^\alpha
\partial A_{q,i}^\sigma }
\circ \bar{s}
-s_{t,j}^\gamma
\Bigl(
\dfrac{\partial ^3L}
{\partial A_r^\alpha
\partial A_t^\gamma
\partial A_{q,j}^\sigma }
\circ\bar{s}
\Bigr)
\right. \\
& \left.
-\dfrac{\partial ^2s_t^\gamma }
{\partial x^j\partial x^h}
\Bigl(
\dfrac{\partial ^3L}
{\partial A_r^\alpha
\partial A_{t,h}^\gamma
\partial A_{q,j}^\sigma }
\circ\bar{s}
\Bigr)
\right)
v_r^\alpha \\
& +\Bigl(
\dfrac{\partial ^2L}
{\partial A_q^\sigma
\partial A_{r,i}^\alpha }
\circ \bar{s}
-\dfrac{\partial ^2L}
{\partial A_r^\alpha
\partial A_{q,i}^\sigma }
\circ \bar{s}
\Bigr)
\dfrac{\partial v_r^\alpha }
{\partial x^i}
-\Bigl(
\dfrac{\partial ^2L}
{\partial A_{r,j}^\alpha
\partial A_{q,i}^\sigma }
\circ\bar{s}
\Bigr)
\dfrac{\partial v_r^\alpha }
{\partial x^i\partial x^j} \\
& -\left(
\dfrac{\partial ^3L}
{\partial x^j\partial A_{r,i}^\alpha
\partial A_{q,j}^\sigma }
\circ \bar{s}+s_{t,j}^\beta
\Bigl(
\dfrac{\partial ^3L}
{\partial A_t^\beta
\partial A_{r,i}^\alpha
\partial A_{q,j}^\sigma }
\circ \bar{s}
\Bigr)
\right. \\
& \left.
+\dfrac{\partial ^2 s_t^\beta }
{\partial x^h \partial x^j}
\Bigl(
\dfrac{\partial ^3L}
{\partial A_{t,h}^\beta
\partial A_{r,i}^\alpha
\partial A_{q,j}^\sigma }
\circ\bar{s}
\Bigr)
\right)
\dfrac{\partial v_r^\alpha }
{\partial x^i}.
\end{array}
\]
Finally, taking the equations
$\Omega \circ \bar{s}
=\Omega \circ j^1s$,
$L=\bar{L}\circ \Omega$,
into account, we conclude that
$\bar{s}$ can be replaced
by $j^1s$ into the previous equation
and we can end the proof by simply
applying Remark \ref{remm}.
\end{proof}

\begin{corollary}
A Jacobi field
$\bar{X}
\in T_{\bar{s}}
\bar{\mathcal{S}}_\Lambda $
is integrable if and only if
the Jacobi field
$\varrho _\ast (\bar{X})
=X\in T_s\mathcal{S}_\Lambda $,
$s=p_{10}\circ \bar{s}$,
is integrable.
\end{corollary}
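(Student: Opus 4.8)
The plan is to transport one-parameter families of solutions across the affine fibration $\varrho$, exploiting its finite structure from Theorem \ref{th1} together with its linearisation from Theorem \ref{T2}. Recall that a Jacobi field is integrable precisely when it is the initial velocity of a one-parameter family of genuine extremals; at the H-C level I adopt the parallel definition, that $\bar{X}\in T_{\bar{s}}\bar{\mathcal{S}}_\Lambda$ is integrable when there exists a one-parameter family $\bar{S}_t\in \bar{\mathcal{S}}_\Lambda$ with $\bar{S}_0=\bar{s}$ whose initial velocity is $\bar{X}$.

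For the direct implication I would start from such a family $\bar{S}_t$ and project it, setting $S_t=p_{10}\circ \bar{S}_t=\varrho(\bar{S}_t)$. By Theorem \ref{th1} each $S_t$ lies in $\mathcal{S}_\Lambda$, so $S_t$ is a one-parameter family of extremals, and differentiating at $t=0$ produces its initial velocity $(p_{10})_\ast\circ \bar{X}=\varrho_\ast(\bar{X})=X$. Hence $X$ is integrable; this half is immediate from the definition of $\varrho$ as post-composition with $p_{10}$.

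The converse is where the affine structure does the work. Suppose $X\in T_s\mathcal{S}_\Lambda$ is integrable through a family $S_t\in \mathcal{S}_\Lambda$ with initial velocity $X$. By Theorem \ref{th1} I may write $\bar{s}=j^1s+\tau_0$ for a unique symmetric tensor $\tau_0\in \Gamma(S^2T^\ast M\otimes \mathrm{ad}P)$, while the proof of Theorem \ref{T2} furnishes the decomposition $\bar{X}=X^{(1)}+\tau_1$, where $\tau_1\in \ker\varrho_\ast\cong \Gamma(S^2T^\ast M\otimes \mathrm{ad}P)$ is the symmetric tensor denoted $t$ there. I would then define the affine lift
\[
\bar{S}_t=j^1S_t+\tau_0+t\,\tau_1.
\]
Since $\tau_0+t\,\tau_1$ is a symmetric $\mathrm{ad}P$-valued covariant $2$-tensor, each $\bar{S}_t$ lies in the fibre $\varrho^{-1}(S_t)$, hence in $\bar{\mathcal{S}}_\Lambda$ by Theorem \ref{th1}. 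Clearly $\bar{S}_0=j^1s+\tau_0=\bar{s}$, and differentiating at $t=0$ yields the initial velocity $\frac{\partial}{\partial t}\big|_{t=0}(j^1S_t)+\tau_1=X^{(1)}+\tau_1=\bar{X}$, so $\bar{X}$ is integrable.

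The single delicate point is the identity $\frac{\partial}{\partial t}\big|_{t=0}(j^1S_t)=X^{(1)}$, namely that prolongation of a family commutes with differentiation in the family parameter and returns the prolongation of the initial velocity. In coordinates this amounts to interchanging $\partial/\partial t$ with $\partial/\partial x^j$ in the derivative coordinates $A_{i,j}^\alpha(j^1S_t)=\partial s_i^\alpha/\partial x^j$, and it follows from the prolongation formulas \eqref{i.c.t.} together with Lemma \ref{l1}, exactly as in the derivation of $\bar{X}=X^{(1)}+t$ in the proof of Theorem \ref{T2}. I expect this commutation to be the only technical obstacle; the remainder is forced by the compatibility of the affine bundle $\varrho$ with its linearisation $\varrho_\ast$, both being linear in the fibre tensor, which ensures that an affine lift of a curve of extremals is again a curve of H-C solutions carrying the prescribed initial velocity.
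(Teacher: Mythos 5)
Your proof is correct and follows essentially the same route as the paper's: project the family through $p_{10}$ for the direct implication, and use the decomposition $\bar{X}=X^{(1)}+t$ from Theorem \ref{T2} together with the affine fibre description of Theorem \ref{th1} to lift an integrating family for the converse. Your lift $j^1S_t+\tau_0+t\,\tau_1$ is in fact slightly more careful than the paper's $\bar{s}_\varepsilon=j^1s_\varepsilon+\varepsilon t$, which starts at $j^1s$ rather than at $\bar{s}$ when $\bar{s}$ is not holonomic.
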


\begin{proof}
If $\bar{X}
\in T_{\bar{s}}
\mathcal{\bar{S}}_\Lambda $
is an integrable Jacobi field, then
$\bar{X}
=d/d\varepsilon |_{\varepsilon =0}
\bar{s}_\varepsilon $
where
$\bar{s}_\varepsilon
=j^1s_\varepsilon +t_\varepsilon $,
with
$\pi _{10}\circ \bar{s}
=s_\varepsilon \in \mathcal{S}_\Lambda $,
$t_\varepsilon $ being a symmetric tensor.
Then,
\begin{equation*}
\varrho _\ast (\bar{X})
=(\pi _{10})_\ast \bar{X}
=\left.
\frac{d}{d\varepsilon }
\right| _{\varepsilon =0}
\pi _{10}\circ
\bar{s}_\varepsilon
=\left.
\frac{d}{d\varepsilon }
\right| _{\varepsilon =0}
s_\varepsilon .
\end{equation*}
Hence $X=\varrho _\ast (\bar{X})$
is integrable. Conversely, assume
$X=d/d\varepsilon |_{\varepsilon =0}
s_\varepsilon $,
$s_\epsilon \in \mathcal{S}_\Lambda $,
and
$\bar{X}\in (\varrho _\ast )^{-1}X $.
We know that
$\bar{X}=X^{(1)}+t$,
where $t$ is a symmetric tensor.
Then,
$\bar{X}
=d/d\varepsilon |_{\varepsilon =0}
\bar{s}_\varepsilon $ with
$\bar{s}_\varepsilon
=j^1s_\varepsilon +\varepsilon
t\in \mathcal{\bar{S}}_\Lambda $,
and $\bar{X}$ is integrable.
\end{proof}

\section{H-C self-dual and anti-self-dual connections}

Let $(M,g)$ be a pseudo-Riemannian $n$-dimensional
oriented connected manifold of signature $(n^+,n^-)$,
$n=n^++n^-$, and let $\mathbf{v}_g=\sqrt{|\det (g_{ij})|}
dx^1 \wedge \ldots \wedge dx^n$, $g_{ij}
=g(\partial /\partial x^i,\partial/\partial x^j)$,
be its pseudo-Riemannian volume form. The canonical
duality isomorphism attached to $g$ is denoted by
$T_xM\to T_x^\ast M$, $X\mapsto X^\flat $,
with inverse map $T_x^\ast M\to T_xM$,
$w\mapsto w^\sharp $. Let $g^{(r)}$ be the metric
on $\bigwedge ^rT^\ast M$ given by
$g^{(r)}(w^1\wedge \ldots \wedge w^r,\bar{w}^1
\wedge \ldots \wedge \bar{w}^r)
=\det (g((w^i)^\sharp ,(\bar{w}^j)^\sharp ))_{i,j=1}^r$.

Let $V\to M$ be a vector bundle.
The Hodge star can be extended
to $V$-valued forms as follows:
$\star (\omega _r\otimes v)
=(\star \omega _r)\otimes v$,
$\forall \omega _r
\in \bigwedge ^r T_x^\ast M$,
$\forall v\in V_x$.

Let $\pi\colon P\to M$ be a principal
$G$-bundle and let $\pi _{\mathrm{ad}P}
\colon \mathrm{ad}P\to M$
be the adjoint bundle; i.e.,
the bundle associated with $P$
under the adjoint representation of $G$
on its Lie algebra $\mathfrak{g}$.
For every $B\in \mathfrak{g}$ and every
$u\in P$, let $(u,B)_G$ be the coset
of $(u,B)\in P\times \mathfrak{g}$
modulo $G$. A symmetric bilinear form
$\langle \cdot,\cdot\rangle
\in S^2\mathfrak{g}^\ast $ is said to be
invariant under the adjoint representation
if the following equation holds:
$\left\langle \mathrm{Ad}_{g}B,
\mathrm{Ad}_gC\right\rangle
=\left\langle B,C\right\rangle $,
$\forall g\in G$,
$\forall B,C\in\mathfrak{g}$.
By taking derivatives
on this equation we obtain
$\left\langle \left[ A,B\right] ,
C\right\rangle +\left\langle B,
\left[ A,C\right] \right\rangle =0$,
$\forall A,B,C\in \mathfrak{g}$.
If the group $G$ is connected,
then both equations above are equivalent.

Every symmetric bilinear form
$\langle\cdot ,
\cdot \rangle\in S^2\mathfrak{g}^\ast $
invariant under the adjoint representation
induces a fibred metric
$\left\langle \! \left\langle \cdot ,
\cdot \right\rangle \!\right\rangle
\colon \mathrm{ad}P\oplus \mathrm{ad}P
\to \mathbb{R}$ by setting
\begin{equation}
\left\langle \! \left\langle (u,B)_G,
(u,C)_G\right\rangle \! \right\rangle
=\left\langle B,C\right\rangle ,
\quad
\forall u\in P,\;\forall B,C\in \mathfrak{g}.
\label{<< >>}
\end{equation}
We further assume that the pairing \eqref{<< >>}
is non-degenerate.

Every pseudo-Riemannian metric $g$ on $M$
and every fibred
$\left\langle \!\left\langle \cdot ,
\cdot\right\rangle \! \right\rangle $
on $\mathrm{ad}P$ induce a fibred metric
on the vector bundle of $\mathrm{ad}P$-valued
differential $r$-forms on $M$ as follows:
$(\! (\alpha _r\otimes a,\beta _r\otimes b
)\! )=g^{(r)}(\alpha _r,\beta _r)
\left\langle \! \left\langle
a,b\right\rangle \! \right\rangle $,
$\forall \alpha _r,\beta _r
\in \bigwedge ^rT_x^\ast M$,
and $\forall a,b\in (\mathrm{ad}P)_x$.
Moreover, the pairing \eqref{<< >>}
defines an exterior product (see \cite{Bl}),
\[
\begin{array}
[c]{l}
\dot{\wedge}\colon
\left(
\bigwedge ^\bullet
T^\ast M\otimes \mathrm{ad}P
\right)
\oplus
\left(
\bigwedge ^\bullet T^\ast M
\otimes \mathrm{ad}P
\right)
\to \bigwedge ^\bullet T^\ast M,\\
\left(
\alpha _q\otimes a
\right)
\dot{\wedge }
\left(
\beta _r\otimes
b\right)
=\left(
\alpha _q\wedge \beta _r
\right)
\left\langle
\!\left\langle
a,b
\right\rangle \!
\right\rangle .
\end{array}
\]

Let $p\colon C\to M$ be the bundle of connections
of $P$. According to the previous definitions,
a pseudo-Riemannian metric $g$ on $M$ and
an adjoint-invariant symmetric bilinear
form $\langle \cdot,\cdot \rangle $ allow one
to define a quadratic Lagrangian density
$\Lambda =L\mathbf{v}_g$ on $J^1C$ by setting,
\begin{align}
\Lambda
\left(
j_x^1s
\right)
& =\left( \! \left(
\Omega ^s(x),\Omega ^s(x)
\right) \! \right)
\mathbf{v}_g(x)
\label{YM1}\\
& =\Omega ^s(x)\dot{\wedge}\star \Omega ^s(x),
\nonumber
\end{align}
where $s$ is a local section of $p$ defining
a principal connection whose curvature form
is denoted by $\Omega ^s$.
In \cite{CM2} it is proved that the E-L equations
of the Lagrangians \ above are seen to be independent
of the pairing \eqref{<< >>} and they coincide
with the classical Yang-Mills equations:
 $\nabla ^s\star \Omega ^s=0$.

\begin{theorem}
\label{th}In addition to the hypotheses above,
assume $\dim M=4$ and $g$ is a Riemannian metric.
Let $\mathcal{S}_\Lambda ^+$ (resp.\
$\mathcal{S}_\Lambda ^-$) be the set of self-dual
(resp.\ anti-self-dual) connections
with respect to the Lagrangian density
\emph{\eqref{YM1}}.
A section $\bar{s}$ of $p_1\colon J^1C\to M$
belongs to the fibre $\varrho ^{-1}(s)$,
with $s\in\mathcal{S}_\Lambda ^+$ (resp.\
$s\in\mathcal{S}_\Lambda ^-$) if and only if
the following equations hold:
\begin{align}
\mathrm{alt}(\bar{s}-j^1s)
& =0,\label{eq1}\\
\star (\Omega \circ \bar{s})
& =(\Omega \circ \bar{s})
\quad
\text{(resp.\ }
\star(\Omega \circ \bar{s})
=-(\Omega \circ \bar{s})
\text{)}, \label{eq2}
\end{align}
where $\mathrm{alt}\colon \otimes ^2T^\ast M
\otimes \mathrm{ad}P\to \bigwedge ^2T^\ast M
\otimes\mathrm{ad}P$ is the alternating operator.
\end{theorem}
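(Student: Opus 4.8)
The plan is to fix $s:=p_{10}\circ\bar{s}$ and to recognize that \eqref{eq1} is nothing but the anti-symmetrization condition \eqref{sim} appearing in Theorem \ref{th1}: it simultaneously places $\bar{s}$ in a candidate fibre of $\varrho$ and makes the curvature identity \eqref{curr} available, while \eqref{eq2} then records the (anti-)self-duality of the resulting common curvature. Both implications will be reduced to Theorem \ref{th1}, Corollary \ref{corol_0}, and the classical observation that an (anti-)self-dual connection automatically solves the Yang-Mills equations.

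For the direct implication, I would assume $\bar{s}\in\varrho^{-1}(s)$ with $s\in\mathcal{S}_\Lambda^+$. By Theorem \ref{th1} one writes $\bar{s}=j^1s+t$ with $t\in\Gamma(S^2T^\ast M\otimes\mathrm{ad}P)$ symmetric, so $\mathrm{alt}(\bar{s}-j^1s)=\mathrm{alt}(t)=0$, which is \eqref{eq1}. Since $t$ is symmetric the condition \eqref{sim} holds, hence \eqref{curr} gives $\Omega\circ\bar{s}=\Omega\circ j^1s=\Omega^s$; as $s$ is self-dual, $\star\Omega^s=\Omega^s$ and \eqref{eq2} follows. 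The anti-self-dual case is identical up to sign.

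For the converse I would start from \eqref{eq1} and \eqref{eq2}. Equation \eqref{eq1} says precisely that $t:=\bar{s}-j^1s$ is a symmetric tensor, i.e. the first group of H-C equations in Corollary \ref{corol_0} holds and, being \eqref{sim}, it makes \eqref{curr} applicable, so that $\Omega\circ\bar{s}=\Omega\circ j^1s=\Omega^s$. Substituting this into \eqref{eq2} yields $\star\Omega^s=\Omega^s$, so $s$ is self-dual. The crucial step is now to deduce that $s\in\mathcal{S}_\Lambda$: because $\star\Omega^s=\Omega^s$, the Yang-Mills equation $\nabla^s\star\Omega^s=0$ reduces to $\nabla^s\Omega^s=0$, which is the Bianchi identity. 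Thus the second (pseudo-E-L) group of equations in Corollary \ref{corol_0} is automatically satisfied; together with \eqref{eq1} this shows $\bar{s}\in\mathcal{\bar{S}}_\Lambda$ and $\varrho(\bar{s})=s\in\mathcal{S}_\Lambda^+$, i.e. $\bar{s}\in\varrho^{-1}(s)$.

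Every computation here is routine given the earlier results; the only non-formal input is the Bianchi-identity argument that (anti-)self-duality forces the Yang-Mills equations, which is exactly what upgrades the purely algebraic conditions \eqref{eq1}--\eqref{eq2} into genuine membership in $\mathcal{\bar{S}}_\Lambda$ through Corollary \ref{corol_0}. The one point requiring care is the order of use: \eqref{curr} may be invoked only after \eqref{eq1} is in force, so the two displayed equations must be exploited jointly rather than in isolation.
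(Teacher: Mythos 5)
Your proposal is correct and follows essentially the same route as the paper: reduce both directions to Theorem \ref{th1}/Corollary \ref{corol_0}, use the symmetry of $t=\bar{s}-j^1s$ to invoke the curvature identity \eqref{curr}, and transfer (anti-)self-duality between $\Omega\circ j^1s$ and $\Omega\circ\bar{s}$. You are in fact slightly more explicit than the paper on the converse, where the Bianchi-identity observation that (anti-)self-duality forces the Yang-Mills equations is indeed the step needed to conclude $\bar{s}\in\mathcal{\bar{S}}_\Lambda$.
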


\begin{remark}
The equation \eqref{eq1} is a first-order differential
equation, whereas \eqref{eq2} is a purely algebraic
equation. The conditon $\mathrm{alt}(\bar{s}-j^1s)=0$
is not specific of the Yang-Mills Lagrangian
but general for any weakly regular gauge-invariant
Lagrangian. In fact, it defines the subset
of $\Gamma (p_1)$ given by
\begin{equation}
\{ j^1s+t:s\in \Gamma (p),t\in \Gamma
(S^2T^\ast M\otimes \mathrm{ad}P)\} .
\label{sym}
\end{equation}
It is thus interesting to note that the group
of equations really defined by self-dual (resp.\
anti-self-dual) connections, is not longer
a differential equation but an algebraic
constrain on the subset \eqref{sym}.
\end{remark}

\begin{proof}
[Proof of Theorem \ref{th}]
As $L$ is weakly regular, if
$\bar{s}\in \mathcal{\bar{S}}_\Lambda $,
then $\mathrm{alt}(\bar{s}-j^1s)=0$
by virtue of Remark \ref{remark}.
Moreover, as
\[
-\dfrac{\partial }{\partial x^j}
\Bigl(
\dfrac{\partial L}{\partial A_{i,j}^\alpha }
\circ \bar{s}
\Bigr)
+\dfrac{\partial L}{\partial A_i^\alpha }
\circ \bar{s}
=-\dfrac{\partial }{\partial x^j}
\Bigl(
\dfrac{\partial L}{\partial A_{i,j}^\alpha }
\circ j^1s
\Bigr)
+\dfrac{\partial L}{\partial A_i^\alpha }
\circ j^1s,
\]
as $\Omega \circ\bar{s}=\Omega \circ j^1s$,
we also obtain $\nabla ^s\star \Omega ^s=0$,
where $\Omega ^s=\Omega \circ j^1s$ and
$s=p_{10}\circ \bar{s}$. Hence, the definition
of a self-dual connection (i.e.,
$\star (\Omega \circ j^1s)=(\Omega \circ j^1s)$)
can be written as $\star (\Omega \circ \bar{s})
=(\Omega \circ \bar{s})$ for every
$\bar{s}\in \varrho ^{-1}(\mathcal{S}_\Lambda ^+)$.
Similarly for $\mathcal{S}_\Lambda ^-$.
\end{proof}

\end{document}